\newtheorem{theorem}{Theorem}
\newtheorem{cor}[theorem]{Corollary}
\newtheorem{lemma}[theorem]{Lemma}
\newtheorem{remark}[theorem]{Remark}
\newtheorem{definition}[theorem]{Definition}
\newtheorem{example}[theorem]{Example}
\def\BState{\State\hskip-\ALG@thistlm}
\begin{document}


\title{Coded Caching in a Multi-Server System \\ with Random Topology}
\author{
\IEEEauthorblockN{Nitish Mital, Deniz G{\"u}nd{\"u}z and Cong Ling}\\
\vspace*{-1.65cm}
}

\maketitle

\begin{IEEEkeywords}
Coded caching, distributed storage, partial connectivity, multi-server caching, femtocaching.
\end{IEEEkeywords}

\begin{abstract}
Cache-aided content delivery is studied in a multi-server system with $P$ servers and $K$ users, each equipped with a local cache memory. In the delivery phase, each user connects randomly to any $\rho$ out of $P$ servers. Thanks to the availability of multiple servers, which model small-cell base stations (SBSs), demands can be satisfied with reduced storage capacity at each server and reduced delivery rate per server; however, this also leads to reduced multicasting opportunities compared to the single-server scenario. A joint storage and proactive caching scheme is proposed, which exploits coded storage across the servers, uncoded cache placement at the users, and coded delivery. The delivery \textit{latency} is studied for both \textit{successive} and \textit{parallel} transmissions from the servers. It is shown that, with successive transmissions the achievable average delivery latency is comparable to the one achieved in the single-server scenario, while the gap between the two depends on $\rho$, the available redundancy across the servers, and can be reduced by increasing the storage capacity at the SBSs. The optimality of the proposed scheme with uncoded cache placement and MDS-coded server storage is also proved for successive transmissions. 
\end{abstract}
\let\thefootnote\relax\footnotetext{Part of this work was presented at the IEEE Wireless Communications and Networking Conference (WCNC) in 2018 \cite{nmital}. \\
This work was supported in part by the European Union`s H2020 research and innovation programme under the Marie Sklodowska-Curie Action SCAVENGE (grant agreement no. 675891), and by the European Research Council (ERC) Starting Grant BEACON (grant agreement no. 677854).\\
The authors are with the Department of Electrical and Electronic Engineering, Imperial College London (Email: \{n.mital, d.gunduz, c.ling\}@imperial.ac.uk ).}

\section{Introduction}\label{sec:intro}

Coded caching and distributed storage have received significant attention in recent years to exploit the available memory space and processing power of individual network nodes to increase the throughput and efficiency of data availability. With proactive caching, part of the data can be pushed to nodes' local cache memories during off-peak hours, called the \textit{placement phase}, to reduce the burden on the network during peak traffic periods, called the \textit{delivery phase} \cite{maddah} - \cite{comb1}. A different type of coded caching also improves the delivery performance in the so-called ``femtocaching'' scenario \cite{femtocaching}, where multiple cache-equipped small-cell base stations (SBSs) collaboratively deliver contents to users. Coding for distributed storage systems has been extensively studied in the literature (see, for example, \cite{dimakis_networkcodes}), and in the femtocaching scenario, ideal maximum distance separable (MDS) codes allow users to recover contents by collecting parity bits from only a subset of SBSs they connect to \cite{femtocaching}.

\begin{figure}
\centering
\begin{tabular}{@{}c@{}}
       \quad \quad \quad \ \  \includegraphics[width=0.22\textwidth]{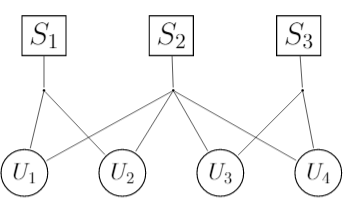}  \label{fig:f11}\\
        \small $(a)$ $\rho=2$, $q_1=2, q_2=4, q_3=2$.
    \end{tabular}
\begin{tabular}{@{}c@{}}
        \includegraphics[width=0.24\textwidth]{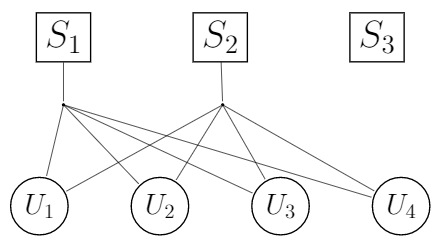} \label{fig:f12}\\
        \small $(b)$ $\rho=2$, $q_1=4, q_2=4, q_3=0$ (best topology (for successive transmissions), worst topology (for parallel transmissions)).
    \end{tabular}
\begin{tabular}{@{}c@{}}
        \includegraphics[width=0.22\textwidth]{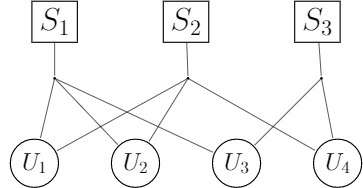}\label{fig:f13} \quad \quad 
        \includegraphics[width=0.22\textwidth]{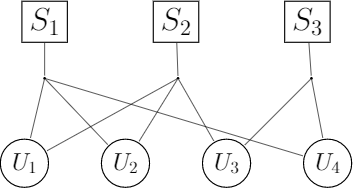}\label{fig:f1}\\
        \small $(c)$ $\rho=2, q_1=3, q_2=3, q_3=2$ (worst topology (for successive transmissions), best topology (for parallel transmissions))
    \end{tabular}
\caption{Examples of different network topologies for $P=3$ and $K=4$ with $\rho=2$.}
\label{fig:f1}
\end{figure}

In this work, we combine distributed storage at the SBSs, similar to the ``femtocaching'' framework \cite{femtocaching}, with cache storage at the users, and consider coded delivery over error-free shared broadcast links \cite{nmital}. We consider a library of $N$ files stored across $P$ SBSs, each equipped with a limited-capacity storage space (see Fig. \ref{fig:f1}). Unlike the existing literature, we consider a random connectivity model: during the delivery phase, each user connects only to a random subset of $\rho$ SBSs, where $\rho \leq P$. This may be due to physical variations in the channel, or due to resource constraints. Most importantly, the connections that form the network topology are not known in advance during the placement phase; therefore, the cache placement cannot be designed for a particular network topology. Storing the files across multiple SBSs, and allowing users to connect randomly to a subset of them results in a loss in multicasting opportunities for the servers, indicating a trade-off between the coded caching gain and the flexibility provided by distributed storage across the servers, which, to the best of our knowledge, has not been studied before. 

On the other hand, the presence of multiple servers may improve the latency if user requests can be satisfied in parallel. Accordingly, two scenarios are discussed depending on the delivery protocol. If the servers transmit \textit{successively}, i.e., time-division transmission, the total latency is the sum of the latencies on each link in delivering all the requests. If the servers operate in parallel, then the latency is given by the link with the maximum latency. 

We propose a practical coded storage and delivery scheme that exploits MDS coded storage across servers simultaneously with coded caching and delivery to users. In the successive transmission scenario, we show that the cost of the flexibility of distributed storage is a scaling of the latency by a constant. We also characterize the average worst-case latency (over all user-server associations) of the proposed scheme by assuming that the users connect to a uniformly random subset of the servers; and show that it is relatively close to the best-case performance, which is the single-server centralized delivery latency derived in\cite{maddah}, achieved when all the users connect to the same set of servers, maximizing the multicasting opportunities. We observe that, as the server storage capacities increase, the average delivery latency vs. user cache memory trade-off improves, approaching the single-server performance. We give an analytical expression to compute the average delivery latency for different server storage capacities, which is shown to give a fairly accurate estimate of the expected delivery latency when the number of servers is large. We then consider the delivery latency when the servers can transmit in parallel. We characterize the achievable average worst case delivery latency of the proposed coded storage and delivery scheme as a function of the server storage capacity for different $\rho$ values. 

In a related work \cite{Vaneet}, the authors study coded caching schemes presented in \cite{maddah} and \cite{chao} when parity servers are available. The authors consider special scenarios with one and two parity servers. They propose a scheme that stripes the files into blocks, and codes them across the servers with a systematic MDS code, and they also propose a scheme for the scenario in which files are stored as whole units in the servers, without striping. In our work, we do not specify servers as parity servers, and instead propose a scheme that generalizes to the use of any type of MDS code and any number of storage servers. We study the impact of the topology on the sum and maximum delivery rates, and the trade-off between the server storage capacity and the average of these rates. 

In \cite{shariatpanahi}, the authors consider multiple servers, each having access to all the files in the library, serving the users through an intermediate network of relays. They consider the so-called \textit{linear network} model, in which the network topology is fixed but unknown at the relay nodes. The authors study the delivery latency considering parallel transmissions from the servers, and show that there is a gain from using multiple servers when the relay nodes employ simple random linear network coding. Note that, our model considers both limited storage servers and random network topology over the delivery network, which is unknown during the placement phase, but known during the delivery phase. Compared to the linear network model, our model corresponds to an identity network transfer matrix, in which the scheme of \cite{shariatpanahi} does not provide any gains, since it is not optimized for the realization of the topology. 


Another line of related works study caching in combination networks \cite{comb1},\cite{comb3}, which consider a single server serving cache-equipped users through multiple relay nodes. The server is connected to these relays through unicast links, which in turn serve a distinct subset of a fixed number of users through unicast links. A combination network with cache-enabled relay nodes is considered in \cite{comb3}. In our paper, we relax the symmetry of a standard combination network and the assumption of a fixed and known network topology, which would be unrealistic in many practical scenarios, to a certain degree by allowing each user to connect to a random fixed number of servers, thus breaking the symmetry from the servers' perspective while maintaining the symmetry from the end-users' perspective.  \\

\textbf{\textit{Notations}}. For two integers $i<j$, we denote the set $\{i,i+1,\ldots, j \}$ by $[i:j]$, while the set $[1:j]$ is denoted by $[j]$. Sets are denoted with the calligraphic font, and $|\mathcal{A}|$ denotes the cardinality of set $\mathcal{A}$. For $\mathcal{A} =\{a_1,a_2,\ldots,a_p\}$, we define $X_{\mathcal{A}} \triangleq (X_{a_1}, \ldots, X_{a_p})$. $\mathbbm{1}_{E}$ denotes the indicator function of the event $E$, i.e., its value is $1$ when the event $E$ happens. $\lfloor x \rfloor$ denotes the largest integer less than or equal to $x$. $\lceil x \rceil$ denotes the smallest integer greater than or equal to $x$. 

\section{Problem Setting}\label{sec:app:1}
We consider the system model illustrated in Fig. \ref{fig:f1} with $P$ servers, denoted by $\mathrm{S}_1, \mathrm{S}_2, \ldots, \mathrm{S}_P$, serving $K$ users, denoted by $\mathrm{U}_1,\mathrm{U}_2, \ldots, \mathrm{U}_K$. There is a library of $N$ files $W_1, W_2, \ldots, W_N$, each of length $F$ bits uniformly distributed over $[2^{F}]$. Each user has access to a local cache memory of capacity $M_UF$ bits, $0 \leq M_U \leq N$, while each server has a storage memory of capacity $M_SF$ bits. The caching scheme consists of two phases: placement phase and delivery phase. We consider a centralized placement scenario as in \cite{maddah}, which is carried out centrally with the knowledge of the servers and the users participating in the delivery phase. However, neither the user demands, nor the network topology is known in advance during the placement phase. In the delivery phase, we assume that each user randomly connects to $\rho$ servers out of $P$ with a uniform distribution over all $\rho-$subsets, where $\rho \leq P$, and requests a single file from the library. We define $\alpha\triangleq \frac{\rho}{P}$ as the \textit{connectivity} of the network, where $0 \leq \alpha \leq 1$. For $j \in [K]$, let $\mathcal{Z}_j$ denote the set of servers $\mathrm{U}_j$ connects to, where $|\mathcal{Z}_j|=\rho$, and $d_j \in [N]$ denotes the index of the file it requests. For example, in Fig. \ref{fig:f1}(a), $\mathcal{Z}_1=\{ \mathrm{S}_1, \mathrm{S}_2 \}$, $\mathcal{Z}_2=\{ \mathrm{S}_1, \mathrm{S}_2 \}$, $\mathcal{Z}_3=\{ \mathrm{S}_2, \mathrm{S}_3 \}$ and $\mathcal{Z}_4=\{ \mathrm{S}_2, \mathrm{S}_3 \}$. Let the demand vector be denoted by $\mathbf{d}\triangleq(d_1, d_2,..., d_K)$. The topology of the network, i.e., which users are connected to which servers, and the demands of the users are revealed to the servers at the beginning of the delivery phase. 

\begin{figure}
\begin{center}
\includegraphics[trim=-2cm 10cm 0cm 1.8cm,clip,scale=0.40]{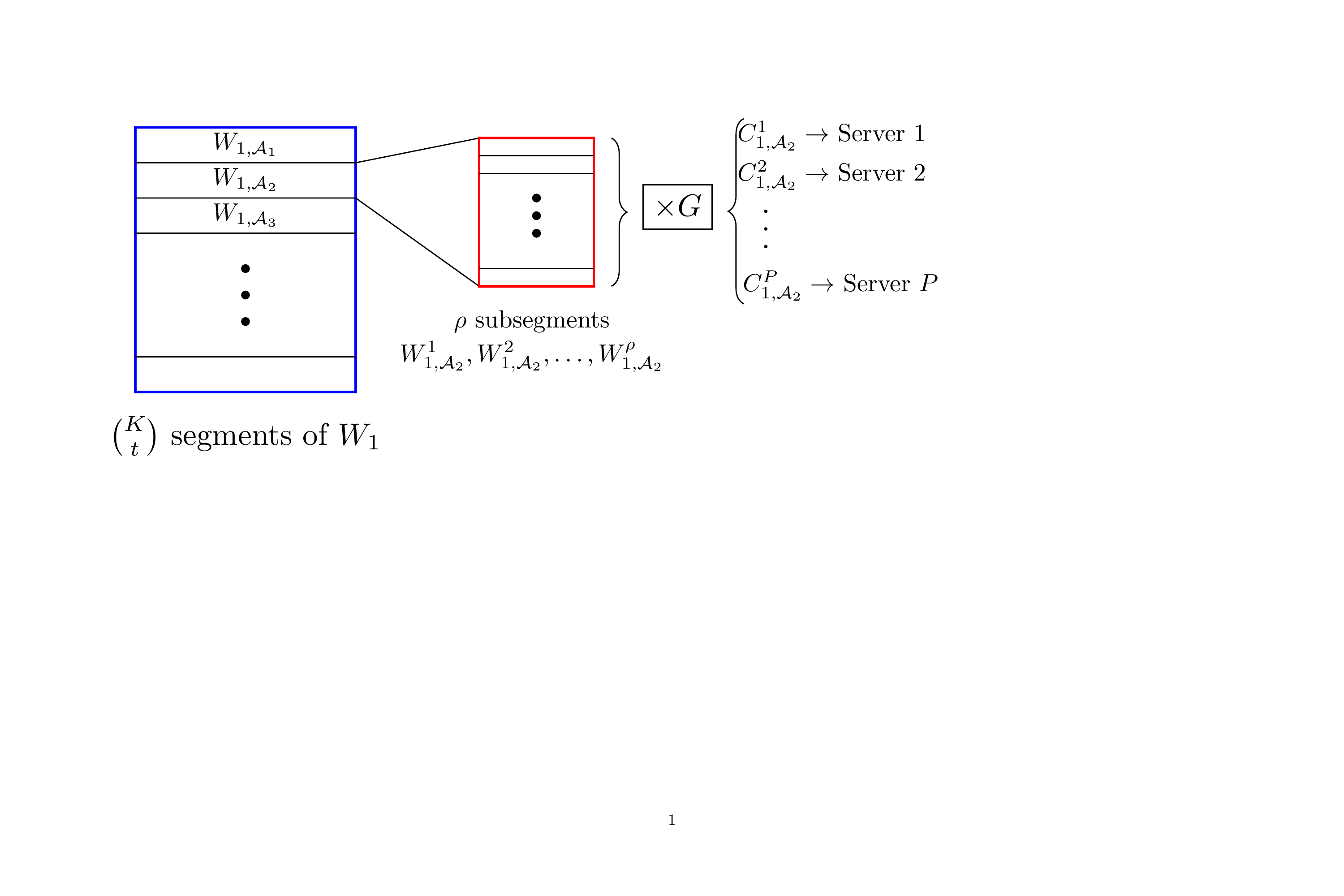}
\caption{Segmentation, MDS coding and placement of files.\label{fig:seg}}
\end{center}
\end{figure}

The complete library must be stored at the servers in a coded manner to provide redundancy, since each user connects only to a random subset of the servers. Since any user should be able to reconstruct any requested file from its own cache memory and the servers it is connected to, the total cache capacity of a user and any $\rho$ servers must be sufficient to recover the whole library; that is, we must have $M_U+\rho M_S\geq N$.

Let $\mathcal{K}_p$ denote the set of users served by $\mathrm{S}_p$, for $p\in [P]$, and define the random variable $Q_p \triangleq \left| \mathcal{K}_p \right|$, which denotes the number of users served by $\mathrm{S}_p$. We shall denote a particular realization of $Q_p$ as $q_p$ and define $\mathbf{q}\triangleq ( q_1, \ldots , q_p )$, where we have $\sum_{p=1}^{P} q_p = K\rho$. For example, in Fig. \ref{fig:f1}(a), we have $\mathcal{K}_1=\{ \mathrm{U}_1,\mathrm{U}_2 \}, \mathcal{K}_2=\{ \mathrm{U}_1, \mathrm{U}_2, \mathrm{U}_3, \mathrm{U}_4\}, \mathcal{K}_3=\{ \mathrm{U}_3, \mathrm{U}_4 \}$, and $\mathbf{q}=(2,4,2)$. In the delivery phase, server $\mathrm{S}_p$ transmits message $X_p$ of size $R_p F$ bits to the users connected to it, i.e., the users in set $\mathcal{K}_{p}$, over the corresponding shared link. We assume that each server is allocated a separate orthogonal delivery channel, and the message it transmits is received by all the users connected to this server. The message $X_{p}$ is a function of the demand vector $\mathbf{d}$, the network topology, the storage contents of server $\mathrm{S}_p$, and the cache contents of the users in $\mathcal{K}_p$. User $U_k$ receives the messages $X_{\mathcal{Z}_k} \triangleq \{ X_p : p \in \mathcal{Z}_k \}$, and reconstructs its requested file $W_{d_k}$ using these messages and its local cache contents.

\subsection{Formal Problem Statement}
We now provide the formal definition of the caching problem. Let $\{W_n\}_{n=1}^{N}$ be $N$ independent random variables each uniformly distributed over $[2^F]$ for some $F \in \mathbb{N}$. Each $W_n$ represents a file of size $F$ bits. Let $R_p, p\in [P],$ be the number of bits, normalized by the size of a file, transmitted by server $p\in [P]$ during the delivery phase. A $(M_S,M_U,R_1,\ldots,R_P)$ storage and caching scheme consists of $P$ server storage functions, $K$ caching functions, $P{P\choose \rho}^K$ encoding functions, and $K{P\choose \rho}^K$ decoding functions.

The caching function
\begin{align}
    \phi_k : [2^F]^N \rightarrow [2^{\lfloor FM_U\rfloor}],\quad  k \in [K],
\end{align}
maps the library $\{ W_n \}_{n=1}^{N}$ into the cache contents, of user $U_k$ during the placement phase:
\begin{align}
    V_k \triangleq \phi_k (W_1,\ldots, W_N)
\end{align}
The server storage function
\begin{align}
    \sigma_{p}: [2^F]^N \rightarrow [2^{\lfloor FM_S\rfloor}], \quad  p\in [P]
\end{align}
maps the library $\{W_n\}_{n=1}^{N}$ into the storage of server $S_p$:
\begin{align}
    Y_p \triangleq \sigma_p (W_1,\ldots,W_N).
\end{align}
We define a separate encoding function for each server depending on the network topology. Hence, the encoding function for server $S_p, p\in [P],$ 
\begin{align}
    \psi^{p}_{\{\mathcal{K}_p\}_{p=1}^{P}}: [N]^K \times [2^{\lfloor FM_S\rfloor}] \rightarrow [2^{\lfloor FR_p\rfloor}]
\end{align}
maps the demand vector and the memory contents of server $S_p$ to message $X_p, $ i.e.,
\begin{align}
    X_{p} \triangleq  \psi^{p}_{\{\mathcal{K}_p\}_{p=1}^{P}}(\mathbf{d},Y_p),
\end{align}
which is delivered to the users in $\mathcal{K}_p$ during the delivery phase. Finally, we define a separate decoding function for each user depending on the network topology. Hence, the decoding function for user $U_k, k\in [K],$ is
\begin{align}
    \mu^k_{\left(\{\mathcal{Z}_k\}_{k=1}^{K}\right)} :[N]^K \times [2^{\lfloor FR_{\pi^k(1)} \rfloor}] \times \cdots \times [2^{\lfloor FR_{\pi^k(\rho)} \rfloor}] \times [2^{\lfloor FM_U \rfloor}] \rightarrow [2^{\lfloor F \rfloor}],
\end{align}
where $\pi^k(1),\ldots, \pi^k(\rho)$ denote the $\rho$ servers in set $\mathcal{Z}_k$, maps the demand vector $\mathbf{d}$, the received signals $X_{\mathcal{Z}_k}$ from the servers in $\mathcal{Z}_k$, and the local cache content $V_k$ to the estimate $\hat{W}_{d_k},$ i.e.,
\begin{align}
    \hat{W}_{d_k} \triangleq \mu^k_{\left(\{\mathcal{Z}_k\}_{k=1}^{K}\right)} (\mathbf{d},X_{\mathcal{Z}_k},V_k)
\end{align} 
The probability of error for this scheme, for a fixed topology, is defined as 
\begin{align}
    \max_{\mathbf{d}\in [N]^K} \max_{k\in [K]} \mathrm{Pr}(\hat{W}_{d_k} \neq W_{d_k}).
\end{align}
We remark here that the storage and caching functions $\sigma_p$ and $\phi_k$ do not depend on the network topology, while the encoding and decoding functions do.
\begin{definition}
The tuple $(M_S,M_U,R_1,\ldots,R_P)$ is said to be achievable if for every $\epsilon >0$ and large enough file size $F$ there exists a $(M_S,M_U,R_1,\ldots,R_P)$ caching scheme with probability of error less than $\epsilon$.
\end{definition}

Our goal is to minimize the delivery latency, which is the time by which all the user requests can be satisfied. Among other parameters, delivery latency also depends on the operation of the SBSs. If each SBS transmits over an orthogonal frequency band, the requests can be delivered in parallel, and the delivery latency is given by $T_{pd}=\max_{p} R_p$. If, instead, the servers transmit successively in a time-division manner, which is suitable for user devices that are simple and not capable of multihoming on multiple frequencies, the normalized delivery latency will be given by $T_{sd}=\sum_{p=1}^{P} R_p$. Our goal will be to find the average worst-case delivery latency, where the worst case refers to the fact that all the users can correctly decode their requested files, independent of the combination of files requested by them, and the averaging is over all possible network topologies. Assuming that $N\geq K$ (i.e., the number of files is larger than the number of users), it is not difficult to see that all the users requesting a different file corresponds to the worst-case scenario. We would also like to remark that, under uniform file popularity, the probability of experiencing this worst-case demand distribution increases significantly with $N$, and approaches $1$ for $N$ values that one expects to experience in practice. 

\section{Coded Distributed Storage and Caching Scheme}
We first note that our system model brings together aspects of distributed storage and proactive caching/coded delivery. To see this, consider the system without any user caches, i.e., $M_U=0$, which is equivalent to a distributed storage system with unreliable servers, where random $P-\rho$ out of $P$ servers are inactive. It is known that MDS codes provide much higher reliability and efficiency compared to replication in this scenario \cite{dimakis_networkcodes}. On the other hand, when the servers are reliable, i.e., $\rho = P$, our system is equivalent to the one in \cite{maddah}, and coded delivery provides significant reductions in the delivery latency. Accordingly, our proposed scheme brings together benefits from coded storage and coded delivery. To illustrate the main ingredients of the proposed scheme we assume $M_S=\frac{N}{\rho}$ in this section, and extend to other server capacities in later sections.

\subsection{Server Storage Placement } \label{sec:serverstorage}
We first describe how the files are stored across the SBS servers in order to guarantee that each user request can be satisfied from any $\rho$ servers a user may connect to (see Fig. \ref{fig:seg}). We define $t \triangleq \frac{KM_U}{N}$, and assume initially that $t$ is an integer, i.e., $t\in [0:M_U]$. The solution for non-integer $t$ values will be obtained through memory-sharing \cite{maddah}. Each file is divided into $K\choose t$ equal-size non-overlapping segments. We enumerate them according to distinct $t$-element subsets of $[K]$, where $W_{j, \mathcal{A}}$ denotes the segment of $W_j$ that corresponds to subset $\mathcal{A}$. We have $W_j = \bigcup_{\mathcal{A} \subset [K]: |\mathcal{A}| = t} W_{j, \mathcal{A}}$, $j \in [N]$.

Each segment is further divided into $\rho$ equal-size non-overlapping sub-segments denoted by $W_{j,\mathcal{A}}^l$, $l\in[\rho]$. The $\rho$ sub-segments of each segment are coded together using a $(P,\rho)$ linear MDS code with generator matrix $G$, giving as output $P$ coded subsegments for segment $W_{j,\mathcal{A}}$, denoted by $C_{j,\mathcal{A}}^{l}, l\in [P]$. $C_{j,\mathcal{A}}^{l}$ is a linear combination of the subsegments of the segment corresponding to subset $\mathcal{A}$, of file $W_j$. $C_{j,\mathcal{A}}^{l}$ will be stored in server $\mathrm{S}_l$, $\forall l \in [P],j\in [N]$, and $\mathcal{A}\subset [K], \vert \mathcal{A} \vert=t$.
Since each sub-segment is of length $\frac{F}{\rho{K\choose t}}$, every linear combination $C_{j,\mathcal{A}}^{l}$ is of the same length; and hence, server storage capacity constraint of $M_S F=\frac{NF}{\rho}$ is met with equality.
\begin{remark}We assume that each user knows the generator matrix of the MDS code to be able to reconstruct any coded subsegment $C_{j,\mathcal{A}}^{l}$ from the uncoded segment $W_{j,\mathcal{A}}$. \end{remark}
\subsection{User Cache Placement}
Using the placement scheme proposed in \cite{maddah} for user caches, each segment of a file, $W_{j,\mathcal{A}}$, is placed into the caches of all the users $\mathrm{U}_k$ for which  $k\in \mathcal{A}$, i.e., each user caches ${K-1 \choose t-1}$ segments of each file, or $\frac{{K-1 \choose t-1}}{{K \choose t}}NF = \frac{t}{K}N = M_U F$ bits, meeting the user cache capacity constraint.
\subsection{Delivery Phase } \label{sec:delivery}
We first make the following observation about the above placement scheme: in the worst-case demand scenario, consider any $t+1$ users. Any $t$ out of these $t+1$ users share in their caches one segment of the file requested by the remaining user. Enumerate these subsets of $t+1$ users as $\mathcal{H}_i$, $i\in \left[{K\choose t+1}\right]$.  Consider server $\mathrm{S}_p$, $p \in [P]$, and one of the $q_p$ users connected to it, say $U_k$. Then, for any subset $\mathcal{H}_i$, that includes $k$, i.e., $k \in \mathcal{H}_i$, the segment $W_{d_k,\mathcal{H}_i \setminus \{k\}}$ is needed by user $\mathrm{U}_k$, but is not available in its cache because $k \notin  \mathcal{H}_i \setminus \{k\}$, while it is available in the caches of the users in $\mathcal{K}_p \bigcap \mathcal{H}_i \setminus \{k \}$. The MDS coded subsegment of $W_{d_k, \mathcal{H}_i \backslash \{k\}} $ stored by $\mathrm{S}_p$ is $C_{d_k, \mathcal{H}_i \backslash \{k\}}^{p}$, and since the users know the generator matrix $G$, each user which has $W_{d_k, \mathcal{H}_i \backslash \{k\}} $ in its cache can reconstruct $C_{d_k, \mathcal{H}_i \backslash \{k\}}^{p} $ as well. Then, for each $\mathcal{H}_i$ that includes at least one user from $\mathcal{K}_p$, $\mathrm{S}_p$ transmits
\begin{align} \label{Tx}
    X_p(\mathcal{H}_i) = \bigoplus_{k \in \mathcal{K}_p \bigcap \mathcal{H}_i \setminus \{k\}} C_{d_k,\mathcal{H}_i \backslash \{k\}}^{p} ,
\end{align}
where $\bigoplus$ denotes the bitwise XOR operation. Then, $\Bigl| \left\{i \in \left[{K\choose t+1}\right]: k \in \mathcal{H}_i \right\} \Bigr| = {K-1 \choose t} $ is the number of messages transmitted by server $\mathrm{S}_p$ that contain the coded version of a segment requested by $\mathrm{U}_k$, and is also equal to the number of segments of $W_{d_k}$ not present in the cache of user $\mathrm{U}_k$. Overall, the message transmitted by $S_p$ is given by
\begin{align} \label{Tx_all}
   X_p = \bigcup_{i \in \left[{K\choose t+1}\right]: \mathcal{K}_p \bigcap \mathcal{H}_i\neq \phi} X_p( \mathcal{H}_i).
\end{align}
From the transmitted message $X_p(\mathcal{H}_i)$ in \eqref{Tx} for each set $\mathcal{H}_i$, user $\mathrm{U}_k$ can decode the MDS coded version $C_{d_k,\mathcal{H}_i \backslash \{ k\}}^{p}$ of its requested segment $W_{d_k, \mathcal{H}_i \backslash \{k\}}$. With the transmissions from all the servers, $\mathrm{U}_k$ receives $\rho$ coded versions of each missing segment from the $\rho$ servers it is connected to. Since each segment is coded with a $(P,\rho)$ MDS code, the user is able to decode each missing segment of its request.

Note that each transmitted message $X_p(\mathcal{H}_i)$ by a server is of length $\left. F \right/ \rho {K\choose t}$ bits. The number of messages transmitted by $\mathrm{S}_p$ is
\begin{align}
    \left| \left\{ i\in \left[{K\choose t+1}\right]: \mathcal{K}_p \bigcap \mathcal{H}_i \neq \phi \right\} \right|&= {K\choose t+1}-\left| \left\{ i\in \left[{K\choose t+1}\right]: \mathcal{K}_p \bigcap \mathcal{H}_i = \phi \right\} \right|\\
    &={K\choose t+1}-{K-q_p\choose t+1}.
\end{align} 
That is, server $\mathrm{S}_p$ transmits a total of $R_p=\sfrac{F}  {\rho {K\choose t}}\left[{K\choose t+1}-{K-q_p\choose t+1}\right]$ bits.

The delivery latency performance of this proposed coded storage and delivery scheme with both successive and parallel SBS transmissions will be studied in the following two sections.  
\begin{remark}
Due to the symmetry in the network across servers and users, the delivery latency of this scheme depends only on the $\mathbf{q}$ vector, not the particular network topology, i.e., what matters is the number of users served by each server, not the identity of the users. More specifically, all permutations of a $\mathbf{q}$ vector, and the associated users, result in the same latency. Hence, we define the ``type" of a network topology as a vector of dimension $K+1$, $\mathbf{g}$, where $g_i$ denotes the number of servers serving $i$ users, for $i=0,1,\ldots, K$. We have $0\leq g_i \leq P$, $\sum_{i=0}^{K} g_i = P$ and $\sum_{i=0}^{K} ig_i=K\rho$.
\end{remark}
\section{Successive SBS Transmissions}\label{s:Successive}
In this section we assume that the SBSs share the same communication resources, and hence, transmit successively to avoid interference. When the SBSs transmit successively in time, the normalized delivery latency is given by
\begin{align} 
   T_{sd} \triangleq \sum_{p=1}^P R_p &= \frac{1}{\rho {K\choose t}} \sum_{p=1}^{P} \left[{K\choose t+1}-{K-q_p\choose t+1} \right] \\
	&=\frac{1}{\alpha} \frac{(K-t)}{(t+1)} - \frac{1}{\rho {K\choose t}} \sum_{p=1}^{P}  {K-q_p \choose t+1} \label{eq:sumrate1} \\
	&= \frac{1}{\alpha} \frac{(K-t)}{(t+1)} - \frac{1}{\rho {K\choose t}} \sum_{i=0}^{K} g_i {K-i \choose t+1}.\label{eq:sumrate}
\end{align}

To characterize the ``best'' and ``worst'' network topologies that lead to the minimum and maximum delivery latency, respectively, we present the following lemma without proof.
\begin{lemma}\label{convexity}
For $n_1, n_2 , r \in \mathcal{Z}^{+}$ satisfying $r \leq n_1$ and $n_1 + 2 \leq n_2$, we have 
\begin{align}
    {n_1 \choose r} + {n_2 \choose r} \geq {n_1+1 \choose r} + {n_2-1 \choose r}.
\end{align}
\end{lemma}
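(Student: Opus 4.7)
The plan is to view the inequality as a discrete convexity (Schur-convexity) statement about $n \mapsto \binom{n}{r}$. Rearranging the claimed inequality gives
\begin{align}
    \binom{n_2}{r} - \binom{n_2-1}{r} \geq \binom{n_1+1}{r} - \binom{n_1}{r},
\end{align}
so it suffices to show that the forward difference $\Delta(n) \triangleq \binom{n+1}{r} - \binom{n}{r}$ is non-decreasing in $n$ on the relevant range, and then apply it to $n = n_1$ and $n = n_2 - 1$, which satisfy $n_2 - 1 \geq n_1 + 1 > n_1$ by the hypothesis $n_1 + 2 \leq n_2$.

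First I would invoke Pascal's identity, $\binom{n+1}{r} = \binom{n}{r} + \binom{n}{r-1}$, which immediately collapses the forward difference to $\Delta(n) = \binom{n}{r-1}$. The claim therefore reduces to
\begin{align}
    \binom{n_2 - 1}{r-1} \geq \binom{n_1}{r-1}.
\end{align}
Since $n_2 - 1 \geq n_1 + 1 > n_1$ and $n_1 \geq r \geq r-1$, both sides lie in the regime where $\binom{m}{r-1}$ is monotonically non-decreasing in $m$, so the inequality follows.

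The only point requiring any care is to check the edge cases. If $r = 0$ then both sides of the original inequality are equal to $2$, so the inequality is trivial; I would treat this case separately (or simply note that the reduction above still gives $\binom{n_2-1}{-1} = 0 = \binom{n_1}{-1}$ under the usual convention). When $r \geq 1$, the hypotheses $r \leq n_1$ and $n_1 + 2 \leq n_2$ guarantee $n_1 \geq r - 1$ and $n_2 - 1 \geq r - 1$, so all binomial coefficients appearing are the usual non-negative integers and the monotonicity step is valid without any boundary subtleties. There is no real obstacle here; the entire argument is one application of Pascal's rule followed by monotonicity of $\binom{m}{r-1}$ in $m$.
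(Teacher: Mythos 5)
Your proof is correct and complete: one application of Pascal's identity turns the claim into $\binom{n_2-1}{r-1}\geq\binom{n_1}{r-1}$, which follows from monotonicity of $\binom{m}{r-1}$ in $m$ given $n_2-1\geq n_1\geq r-1$. There is nothing to compare against on the paper's side, since the lemma is explicitly stated there without proof; your argument is the natural one the authors presumably had in mind, and your handling of the $r=0$ edge case is harmless but unnecessary given that $r\in\mathcal{Z}^{+}$.
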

The lemma above indicates the ``convex'' nature of the binomial coefficients in \eqref{eq:sumrate1}; that is, the points $(r, {r \choose r})$, $(r+1, {r+1 \choose r})$, \ldots, $(n_1+n_2-r, {n_1+n_2-r \choose r})$ form a convex region. From Lemma \ref{convexity}, it can be deduced that the second summation term in \eqref{eq:sumrate1} takes its minimum when $\max_{p}(q_p) \leq \min_{p}(q_p) + 1$, $p\in [P]$, i.e., the values of $q_p$ are as close to each other as possible. This corresponds to the class of topologies with the highest delivery latency (see Fig. \ref{fig:f1}(c) for an example). The topology that requires the minimum delivery latency of $T_{sd} =\frac{K-t}{t+1}$ is when $q_p$ is either $0$ or $K$ for each server, or equivalently, when all the users are connected to the same $\rho$ servers (see Fig. \ref{fig:f1}(b) for an example).

Next we study the average worst-case normalized delivery latency, where the average is taken over all possible network topologies. As we have seen above, the delivery latency depends on the topology, and for a given topology, the ``worst-case'' delivery latency refers to the worst-case demand combination when each user requests a different file. Note that, in the worst case, due to the symmetry in the network and the proposed caching and delivery scheme, the latency depends only on the type of the network topology. We further assume that the probability of having any network of the same type is the same. 

\begin{lemma}\label{lemma_prob}
Let $w_i$ be the probability of exactly $i$ users being served by a server; that is, $w_i=Pr\{ q_p=i \}, p\in [P]$. We have
\begin{align}
\mathbb{E}[g_i] = w_i P.
\end{align}
\end{lemma}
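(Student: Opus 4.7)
The plan is to prove this by a one-line application of linearity of expectation, with the small subtlety being the justification that $w_i$ is well-defined (i.e., $\Pr\{Q_p = i\}$ does not depend on which server $p$ we pick).

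First I would write $g_i$ explicitly as a sum of indicators over the servers:
\begin{align*}
g_i = \sum_{p=1}^{P} \mathbbm{1}_{\{Q_p = i\}}.
\end{align*}
This is immediate from the definition of $g_i$ as the number of servers serving exactly $i$ users. Taking expectations and invoking linearity gives
\begin{align*}
\mathbb{E}[g_i] = \sum_{p=1}^{P} \Pr\{Q_p = i\}.
\end{align*}

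The only remaining step is to argue that the summands are all equal. This follows from the symmetry of the connectivity model: each user independently connects to a uniformly random $\rho$-subset of the $P$ servers, so the distribution of $(Q_1, \ldots, Q_P)$ is invariant under permutations of the server indices. In particular, $Q_p$ is identically distributed across $p \in [P]$, so $\Pr\{Q_p = i\}$ takes the common value $w_i$ for every $p$. Substituting yields $\mathbb{E}[g_i] = P w_i$ as claimed.

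There is no real obstacle here; the lemma is essentially a bookkeeping statement that links the server-level marginal $w_i$ to the expected type vector $\mathbb{E}[\mathbf{g}]$, and will presumably be used in subsequent sections to compute the average worst-case latency in \eqref{eq:sumrate} by replacing $g_i$ with its expectation $P w_i$.
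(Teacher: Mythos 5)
Your proof is correct and follows exactly the paper's own argument: write $g_i = \sum_{p=1}^{P} \mathbbm{1}_{\{Q_p = i\}}$, take expectations, and use the symmetry of the servers so that each term equals $w_i$. The extra remark on why $w_i$ is well-defined (permutation invariance of the connectivity model) is a small but welcome addition that the paper leaves implicit.
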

\begin{proof}
The number of servers serving exactly $i$ users, $g_i$, can be written as
\begin{align}
g_i=\sum_{p=1}^{P} \mathbbm{1}_{\{q_p=i\}}.
\end{align}
Taking expectation on both sides, we have
\begin{align}
\mathbb{E}[g_i] &= \sum_{p=1}^{P} \mathrm{Pr}\{ q_p=i \}\\
&=w_i P.
\end{align}
\end{proof}
The following theorem presents the average normalized worst-case delivery latency of the proposed scheme under successive transmissions, which follows by taking the expectation of both sides of Eq. \eqref{eq:sumrate} and Lemma \ref{lemma_prob}.
\begin{theorem}
The average worst-case normalized delivery latency of the proposed scheme over all topologies under random user-server association is given by 
\begin{align}
\mathbb{E}[T_{sd}]=\frac{1}{\alpha} \frac{(K-t)}{(t+1)} - \frac{1}{\alpha {K\choose t}}\sum_{i=0}^{K} w_i {K-i\choose t+1} .\label{eq:avgrate}
\end{align}
\end{theorem}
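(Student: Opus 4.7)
The proof is essentially a one-line calculation: take expectations on both sides of the identity \eqref{eq:sumrate} that was derived for a fixed topology, and then substitute the result of Lemma \ref{lemma_prob}. The plan is therefore to justify that the expectation commutes with the finite sum over $i$, and then carry out the algebraic simplification using $\alpha = \rho/P$.

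First I would start from \eqref{eq:sumrate}, which expresses the worst-case delivery latency for a fixed topology purely in terms of its type vector $\mathbf{g}$:
\begin{align*}
T_{sd} = \frac{1}{\alpha}\,\frac{K-t}{t+1} - \frac{1}{\rho \binom{K}{t}} \sum_{i=0}^{K} g_i \binom{K-i}{t+1}.
\end{align*}
Since the first term on the right-hand side is deterministic and the sum over $i$ has a fixed finite range, linearity of expectation gives
\begin{align*}
\mathbb{E}[T_{sd}] = \frac{1}{\alpha}\,\frac{K-t}{t+1} - \frac{1}{\rho \binom{K}{t}} \sum_{i=0}^{K} \mathbb{E}[g_i]\, \binom{K-i}{t+1}.
\end{align*}

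Next I would invoke Lemma \ref{lemma_prob}, which states $\mathbb{E}[g_i] = w_i P$, to substitute into the sum. Pulling the factor $P$ outside yields
\begin{align*}
\mathbb{E}[T_{sd}] = \frac{1}{\alpha}\,\frac{K-t}{t+1} - \frac{P}{\rho \binom{K}{t}} \sum_{i=0}^{K} w_i \binom{K-i}{t+1},
\end{align*}
and then using the definition $\alpha \triangleq \rho/P$, so that $P/\rho = 1/\alpha$, gives exactly the claimed expression.

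There is no real obstacle here: the nontrivial work was already packaged into the per-topology formula \eqref{eq:sumrate} (which required counting the number of XOR messages sent per server) and into Lemma \ref{lemma_prob} (which identifies $\mathbb{E}[g_i]$ as $w_i P$ via indicator variables). The only thing worth being a bit careful about is the exchange of expectation and summation, but since the sum over $i \in \{0,\ldots,K\}$ is finite, this is automatic. I would also remark, for clarity, that the expression is valid regardless of the joint distribution of the $q_p$'s: only the marginal probabilities $w_i = \Pr\{q_p = i\}$ enter, and these are identical across servers due to the symmetric uniform user-server association assumed in the model.
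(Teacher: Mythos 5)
Your proposal is correct and follows exactly the paper's own (one-line) argument: take the expectation of Eq.~\eqref{eq:sumrate}, use linearity over the finite sum, substitute $\mathbb{E}[g_i]=w_iP$ from Lemma~\ref{lemma_prob}, and simplify $P/\rho=1/\alpha$. Your added remark that only the marginal probabilities $w_i$ (not the joint distribution of the $q_p$'s) matter is accurate and a nice clarification, but the route is the same as the paper's.
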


Since we have assumed uniform random connectivity, we have $w_i=\sfrac{{K\choose i}{P-1 \choose \rho -1}^{i} {P-1 \choose \rho}^{K-i}}{{P\choose \rho}^K}={K\choose i} \alpha^{i} (1-\alpha)^{K-i}$. The average worst-case latency is given in the following corollary.
\begin{cor}
The average worst-case normalized delivery latency with successive transmissions under uniformly random user-server association is given by 
\begin{align}
\mathbb{E}[T_{sd}]= \frac{K-t}{t+1} \left[ \frac{1-(1-\alpha)^{t+1}}{\alpha} \right].
\end{align}
\begin{proof}
By plugging in $w_i={K\choose i} \alpha^{i} (1-\alpha)^{K-i}$ in Eq. \eqref{eq:avgrate}, we obtain the above simplified expression.
\end{proof}
\end{cor}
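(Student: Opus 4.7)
The plan is to substitute the binomial probabilities $w_i = \binom{K}{i}\alpha^i(1-\alpha)^{K-i}$ directly into the expression for $\mathbb{E}[T_{sd}]$ given in \eqref{eq:avgrate}, and then reduce the resulting sum to a closed form by recognizing it as (a constant multiple of) a binomial expansion. Specifically, the only nontrivial part is evaluating
\begin{align}
S \;\triangleq\; \sum_{i=0}^{K} \binom{K}{i}\alpha^{i}(1-\alpha)^{K-i}\binom{K-i}{t+1}, \nonumber
\end{align}
after which the corollary follows by elementary algebra.

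To handle $S$, I would first pull out $i$ only through terms where the summand is nonzero, i.e.\ $i \le K-t-1$, and apply the subset-of-subset identity $\binom{K}{i}\binom{K-i}{t+1} = \binom{K}{t+1}\binom{K-t-1}{i}$. This reshapes the sum into
\begin{align}
S \;=\; \binom{K}{t+1}\sum_{i=0}^{K-t-1}\binom{K-t-1}{i}\alpha^{i}(1-\alpha)^{K-i}. \nonumber
\end{align}
Factoring $(1-\alpha)^{t+1}$ out of every term leaves the binomial expansion of $(\alpha+(1-\alpha))^{K-t-1}=1$, yielding the clean identity $S = \binom{K}{t+1}(1-\alpha)^{t+1}$.

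Plugging $S$ back into \eqref{eq:avgrate} and using the ratio $\binom{K}{t+1}/\binom{K}{t} = (K-t)/(t+1)$ gives
\begin{align}
\mathbb{E}[T_{sd}] \;=\; \frac{1}{\alpha}\cdot\frac{K-t}{t+1} \;-\; \frac{1}{\alpha}\cdot\frac{K-t}{t+1}\,(1-\alpha)^{t+1}, \nonumber
\end{align}
which factors into the stated expression. There is no genuine obstacle here since the derivation is a bookkeeping exercise; the only step that requires care is invoking the right combinatorial identity so that the $i$-dependent weights reassemble into a binomial series with $\alpha$ and $1-\alpha$ summing to unity. I would also make the trivial sanity checks that the formula recovers $\mathbb{E}[T_{sd}]=(K-t)/(t+1)$ in the limit $\alpha\to 1$ (the single-server rate of \cite{maddah}) and behaves correctly as $\alpha\to 0$, to confirm that no factor has been dropped.
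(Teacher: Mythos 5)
Your proposal is correct and follows essentially the same route as the paper, whose proof is simply ``plug in $w_i={K\choose i}\alpha^i(1-\alpha)^{K-i}$ into Eq.~\eqref{eq:avgrate} and simplify.'' You merely make explicit the algebra the paper leaves implicit, namely the identity ${K\choose i}{K-i\choose t+1}={K\choose t+1}{K-t-1\choose i}$ and the resulting binomial-series collapse to ${K\choose t+1}(1-\alpha)^{t+1}$, which is exactly the intended computation.
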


\subsection{Redundancy in Server Storage Capacity} \label{sec:incservercache}

In the analysis above, we have set the server storage capacity to $M_S=\frac{N}{\rho}$.  On the other hand, for a given user cache capacity $M_U$, the minimum server storage capacity that would allow the reconstruction of any demand combination is given by $M_S=\frac{N-M_U}{\rho}$. In this case, we cache the same $\frac{M_U}{N}$ fraction of the library in all the user caches during the placement phase, and deliver the remaining fraction of the demands from the servers, which is identical to the scheme in \cite{comb3} when the user and its connected servers have just enough space to store the entire library. The worst-case delivery latency in this case is given by $T_{sd}=K\left( 1-\frac{M_U}{N}\right)= K-t$.\\
\indent Next, we consider the case when there is redundancy in server memories; that is, $\frac{N}{\rho}< M_S \leq N$. Assume that $M_S=\frac{N}{\rho - z}$ for some integer $ z\in [\rho -1]$. Define $ \hat{\alpha}\triangleq \frac{\rho-z}{P}  $. Since $\alpha$ is defined as the connectivity of the network, $\alpha-\hat{\alpha}$ is the storage redundancy. For non-integer values of $z$, the solution can be obtained by memory-sharing.

\begin{figure}
\begin{center}
\includegraphics[scale=0.3]{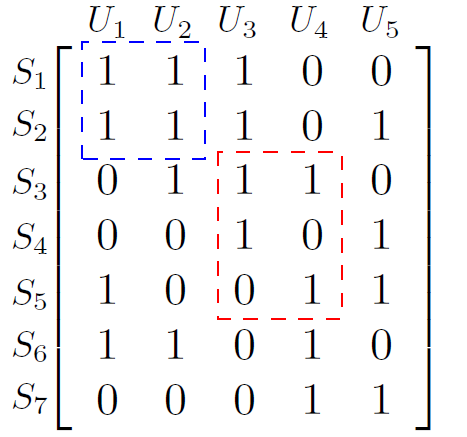}
\caption{An example $7 \times 5$ incidence matrix ($P=7, K=5$) with $\rho = 4$.}
\label{fig:conn_matrix}
\end{center}
\end{figure}

In this case, a $(P,\rho-z)$ MDS code is used for server storage placement, allowing each user to reconstruct any requested file by connecting to $\rho -z$ servers. The user cache placement is done as in the previous section. In the delivery phase, each user randomly connects to $\rho$ servers. We now have a degree of freedom thanks to the additional storage space available at each server. Each user can obtain a segment from any $\rho -z$ of the $\rho$ servers it is connected to by receiving one coded subsegment from each of them. The choice of the servers that deliver the coded subsegments to the users is made such that the multicasting opportunities across the network are maximized. We construct an incidence matrix $A$ of dimensions $P \times K$ such that $a_{ij}=1$ if $S_i$ is connected to $U_j$, $a_{ij}=0$ otherwise. Consider the $(t+1)-$element subset $\mathcal{H}_i$, and the file segments $W_{d_k,\mathcal{H}_i\backslash \{ k\}}, \forall k\in \mathcal{H}_i$. Consider the columns of $A$ corresponding to the users in $\mathcal{H}_i$ and the matrix $Q$ formed by them. Define the minimum cover of $\mathcal{H}_i$ as the smallest $l$ for which a $l\times (t+1)$ submatrix of $Q$ has at least $\rho-z$ non-zero values in each column. The servers corresponding to the $l$ rows of this submatrix have to transmit one coded message each to satisfy the requests for the missing segments corresponding to $\mathcal{H}_i$.  Therefore, the total number of transmissions required to deliver the segments $W_{d_k,\mathcal{H}_i\backslash \{ k\}}, k\in \mathcal{H}_i$, is equal to the minimum cover of $\mathcal{H}_i$. \\
\indent As an example, consider the incidence matrix as shown in Fig. \ref{fig:conn_matrix}, which corresponds to a system with $P=7$ servers and $K=5$ users, where each user connects to $\rho=4$ servers. Assume that the server storage capacity is $M_S=\frac{N}{\rho-2}$ and $t=1$. In this setting, coded subsegments of requested files can be delivered to $t+1=2$ users through multicasting, and it is sufficient for each user to receive coded segments from $\rho-2=2$ servers. Then, for the user set $\mathcal{H}_i=\{1,2\}$, we consider the submatrix corresponding to the columns $1$ and $2$ and rows $1$ and $2$ (marked by the blue dashed lines in Fig. \ref{fig:conn_matrix}), which is the smallest submatrix satisfying the condition that each column has at least $\rho-z=2$ $1$s. Hence, the minimum cover for $\mathcal{H}_i=\{ 1,2 \}$ is equal to the number of rows of this submatrix, that is, $2$.  For $\mathcal{H}_i=\{3,4\}$ (marked by the red dashed lines in Fig. \ref{fig:conn_matrix}), the minimum cover is $3$. Thus, from \eqref{Tx}, for segments $W_{d_k,\{3,4\} \backslash \{ k\}}, k\in \{3,4\}$, $S_3$ transmits the message $X_3(\{3,4\}) = \bigoplus_{k \in \{3,4\}} C_{d_k,\{3,4\} \backslash \{k\}}^{3}$, $S_4$ transmits $X_4(\{3,4\})=C_{d_3,\{4\}}^{4}$, and $S_5$ transmits $X_5(\{3,4\})=C_{d_4,\{3\}}^{5}$. The total number of transmissions is $3$. We can go through all the $(t+1)-$ element subsets of the users and identify for each of them the minimum cover. We note that in the successive transmission scenario, the total latency does not depend on the server transmitting each subsegment, since the contribution to the total latency is the same. In the above example servers $S_1$ and $S_6$ could also deliver the two coded subsegments to users $U_1$ and $U_2$. The selection of the servers matters in the case of parallel transmissions. 
\subsection{Performance analysis}\label{performance_analysis}
In this section, we derive an analytical expression for the expected delivery latency in the asymptotic regime, i.e., when $P \rightarrow \infty$, while $\alpha$ and $\hat{\alpha}$ are fixed. Consider a particular subset $\mathcal{H}$ of $t+1$ users. Define $\beta_i$ as the fraction of servers serving $i$ users in $\mathcal{H}$, $i=0,1,\ldots,t+1$. Thus, we have
\begin{align} \label{beta_def}
\beta_i = \frac{1}{P}\sum_{p=1}^{P} \mathbbm{1}_{\{ \vert \mathcal{H} \cap \mathcal{K}_p\vert = i\}}.
\end{align}
Taking expectation on both sides of Eq. \eqref{beta_def}, we have 
\begin{align}
\mathbb{E}[\beta_i]&=\frac{1}{P} \sum_{p=1}^{P} \mathbb{E}[\mathbbm{1}_{\{ \vert \mathcal{H} \cap \mathcal{K}_p\vert = i\}}] \\
&= \frac{1}{P}  \sum_{p=1}^{P} \mathrm{Pr}( \vert \mathcal{H} \cap \mathcal{K}_p\vert = i)\\
&= \mathrm{Pr}( \vert \mathcal{H} \cap \mathcal{K}_p\vert = i) \label{exp_beta_1}
\end{align}
\begin{align}
= {t+1 \choose i} \alpha^i (1-\alpha)^{(t+1-i)}, \label{exp_beta}
\end{align}
where \eqref{exp_beta_1} follows due to the symmetry across all the servers. By the law of large numbers, $\beta_i \rightarrow \mathbb{E}[\beta_i]$ for all $i\in [K]$, as $P \rightarrow \infty$. Also, the topology becomes symmetric across all users as $P \rightarrow \infty$, i.e., almost all user subsets of the same size are served by the same number of servers.
\begin{figure}
    \centering
    \includegraphics[scale=0.60]{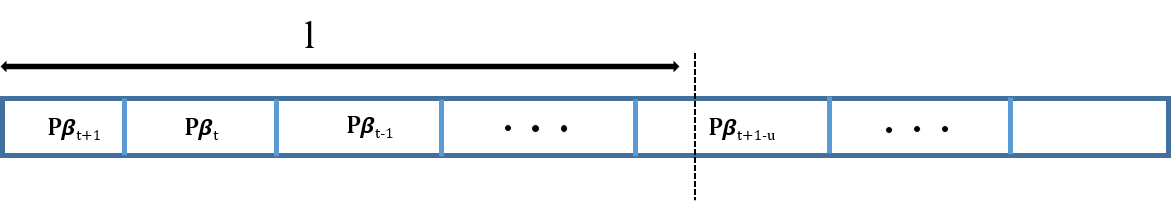}
    \caption{The ordering of servers to count the minimum cover. The dashed line indicates the point at which enough servers have been counted to deliver $\hat{\alpha}$ coded subsegments to all users in $\mathcal{H}$.}
    \label{fig:server_ord}
\end{figure}
We group the servers serving the same number of users and arrange them in the order as illustrated in Fig. \ref{fig:server_ord}, where the first $P\beta_{t+1}$ servers serve $t+1$ users in $\mathcal{H}$, the next $P\beta_{t}$ servers serve exactly $t$ users in $\mathcal{H}$, and so on. To compute the minimum cover $l$, i.e., the minimum number of servers that are needed to deliver $\hat{\alpha}$ coded subsegments to each user in $\mathcal{H}$, we start counting from the left, until each user in $\mathcal{H}$ collects $\hat{\alpha}$ coded subsegments. For some $u\in [0:t]$, we count till the $(u+1)$-th set of servers which serve $t+1-u$ users in $\mathcal{H}$, as in Fig. \ref{fig:server_ord}. When counting the set of servers serving $t+1-u$ users, note that, according to our scheme, the $t+1-u$ users can each extract one coded subsegment from a message transmitted by a server in that set. Therefore, $\lceil \frac{t+1}{t+1-u} \rceil$ servers are required to serve one coded subsegment each to the $t+1$ users in $\mathcal{H}$. Define $\delta$ as the number of coded subsegments required by a single user in $\mathcal{H}$ from the set of servers serving $t+1-u$ users. Therefore, for $P\rightarrow \infty$, the minimum cover can be written as
\begin{align} \label{min_cover}
l\approx P \sum_{j=0}^{u-1} \beta_{t+1-j} + \delta \left\lceil \frac{t+1}{t+1-u} \right\rceil 
\end{align}
for some $u\in [0:t]$, where \eqref{min_cover} follows thanks to the symmetry across users. Note that the above analysis is asymptotic, and does not hold in general for a finite $P$. 
Since a message transmitted by a server serving $i$ users in $\mathcal{H}$ delivers $i$ coded subsegments in total to the $i$ users, the total number of coded subsegments delivered by the $l$ servers that form the minimum cover for the users in $\mathcal{H}$ must be at least $(t+1)\hat{\alpha}$; that is, 
\begin{align} 
P\sum_{j=0}^{u-1}(t+1-j)\beta_{t+1-j} + \delta^{'} \left\lceil \frac{t+1}{t+1-u} \right\rceil \geq (t+1)P\hat{\alpha}, \label{total_segments}
\end{align}
where $\delta^{'}\triangleq (t+1-u)\delta$.
The value of $u$ is determined by solving for
\begin{align}
0 \leq (t+1)P\hat{\alpha}-P\sum_{j=0}^{u-1}(t+1-j)\beta_{t+1-j} \leq  (t+1-u)\beta_{t+1-u} .\label{determine_u}
\end{align}
From Eq. \eqref{exp_beta} and the asymptotic convergence of $\beta_i$ to its expectation, we have
\begin{align}
    \sum_{j=0}^{u-1} (t+1-j) \beta_{t+1-j}\overset{P\rightarrow \infty}{\rightarrow}& \sum_{j=0}^{u-1} {t+1\choose t+1-j} (t+1-j)\alpha^{(t+1-j)} (1-\alpha)^{j}\\
    =& \alpha^{t+1}(t+1)\sum_{j=0}^{u-1} {t\choose s}  \left( \frac{\alpha}{1-\alpha}  \right)^{-j} \label{solving_u}
\end{align}
We substitute \eqref{solving_u} into \eqref{determine_u} to solve for $u$. Having first determined $u$ from Eq. \eqref{determine_u}, and then $\delta$ from \eqref{total_segments}, we can find the minimum cover $l$ from Eq. \eqref{min_cover} for $P\rightarrow \infty$. The delivery latency can thus be estimated as 
\begin{align}\label{closedform_redundant}
    \mathbb{E}[T_{sd}]= \frac{1}{(\rho-z)}\left( \frac{K-t}{t+1}\right)l,
\end{align}
where the factor $\frac{1}{(\rho-z)}\left( \frac{K-t}{t+1}\right)$ is obtained by multiplying the normalized size of each coded subsegment, given by $\frac{1}{(\rho-z){K\choose t}}$, with the number of $(t+1)-$user subsets, given by ${K\choose t+1}$. It will be seen in Section \ref{s:Discussion} that Eq. \eqref{closedform_redundant} provides a fairly accurate estimate of the expected delivery latency when the number of servers $P$ is large. 

\section{Lower Bound}
In this section, we derive a tight lower bound on the minimum expected delivery latency with uncoded cache placement, coded distributed storage in the servers, and successive transmissions, which shows the optimality of the caching and delivery scheme proposed in Section \ref{s:Successive} in certain regimes. Following \cite{optimality_1}, we will first represent the problem as a set of index coding problems.\\
\indent In the index coding problem \cite{index_coding}, a sender wishes to communicate an independent message $M_j, j \in [B]$, uniformly distributed over $[2^{nr_j}]$, to the $j^{th}$ user among $B$ users by broadcasting a message $X^n$ of length $n$. Each user $j$ knows a subset of the messages targeting these $B$ users, indicated by $\mathcal{B}_j, \mathcal{B}_j \subset \{M_1,\ldots,M_B \}$, referred to as side information. A rate tuple $(r_1, \ldots,r_B)$ is achievable, for large enough $n$, if every user can restore its desired message with high probability based on $X^n$ and its side information. The index coding problem can be represented as a directed graph $G$ with $B$ nodes, where node $i$ represents message $M_i$, and a directed edge connects node $i$ to node $j$ if user $j$ knows message $M_i$ as side information. 
For our problem setting, where we have the file library $\{W_i \}_{i=1}^N$, each file $W_i, i\in [N]$, of size $F$ bits is divided into $2^{K}$ non-overlapping segments denoted by $W_{i,\mathcal{A}}$, $\mathcal{A}\in 2^{[K]}$, where $2^{[K]}$ indicates the power set $\{ \phi , \{1 \}, \{2 \}, \{ 3 \}, \{ 1,2 \}$ $, \ldots ,[K] \}$. The segment $W_{i,\mathcal{A}}$ denotes the part of file $W_i$ cached exclusively by users in set $\mathcal{A}$. This is the most general representation of an uncoded caching scheme at the users. For each demand vector $\mathbf{d}$ with distinct requests, corresponding to the worst case scenario, we consider an index coding problem with $K2^{K-1}$ independent messages, each of which represents a segment requested by a particular user and cached by a different subset of the remaining users.

We generate a directed graph with $K2^{K-1}$ nodes corresponding to these messages, such that, for $i \neq j$ and $\mathcal{A}_i \subset [K]\setminus \{ i \}$ and $\mathcal{A}_j \subset [K]\setminus \{ j \}$, there is a directed edge from node $W_{d_i,\mathcal{A}_i}$ to $W_{d_j,\mathcal{A}_j}$, $i\neq j$, if user $U_j$ caches the segment $W_{d_i, \mathcal{A}_i}$; that is, if $j \in \mathcal{A}_i$. 
In the single server centralized setting, we get a lower bound using the index coding bound \cite{optimality_2}. Multi-server index coding has been studied as the distributed index coding problem in \cite{distr_index_coding},\cite{cooperative_index_coding}. In the distributed index coding problem, the servers are considered to store a subset of the messages in uncoded form, and each user is connected to all the servers, whereas in our problem each user can connect to $\rho$ out of $P$ servers randomly. Therefore, for the user to be able to retrieve any requested file from the servers it connects to, the files must be stored using a distributed storage scheme in the servers. Therefore, we analyse the case where the files are stored using erasure codes in the servers. In that, we encode the segment $W_{d_i,\mathcal{A}}$ into $P$ distinct coded subsegments, denoted by $(C_{d_i,\mathcal{A}}^1,\ldots,C_{d_i,\mathcal{A}}^P) \in \prod_{p\in [P]}\left[ 2^{n_pr_{d_i,\mathcal{A}}^p} \right]$, where $n_p=FR_p$ is the length of message in bits transmitted by server $S_p$, such that any $\rho$ coded subsegments can be used to reconstruct the original segment. $r_{d_i,\mathcal{A}}^p$ is the rate at which server $S_p$ transmits the coded subsegment $C_{d_i,\mathcal{A}}^p$ corresponding to user $U_i's$ request, and we have $\sum_{j=1}^{\rho} n_{\pi(j)} r_{d_i,\mathcal{A}}^{\pi(j)} \geq \vert W_{d_i,\mathcal{A}} \vert$ as a necessary condition to ensure that the segment $\vert W_{d_i,\mathcal{A}} \vert$ can be reconstructed by receiving any $\rho$ distinct coded subsegments, where $\pi(j),j\in [\rho]$, are the $\rho$ servers in set $\mathcal{Z}_i$. Recall that $\mathcal{Z}_i$ is the set of $\rho$ servers that serve user $U_i$. Also note that we do not code across files, but encode each file separately.

For the multi-server scenario, we consider $P$ index coding problems, each represented as a distinct directed graph $G_p, p\in [P]$. Each node in $G_p$ corresponds to a distinct coded subsegment $C_{d_i,\mathcal{A}}^{p}$, which is requested by user $U_i$ and available in server $S_p$. By distinct coded subsegments we mean that $H(W_{d_i,\mathcal{A}}\vert C^p_{d_i,\mathcal{A}},C^q_{d_i,\mathcal{A}}) < H(W_{d_i,\mathcal{A}}\vert C^p_{d_i,\mathcal{A}})$ for all $p,q \in [P], p\neq q$. $G_p$ has the same structure as $G$, with the subsegments requested by users not served by server $S_p$ removed. Let the set of nodes in the index coding problem represented by graph $G_p$ be denoted by $\mathcal{I}_p$.

We have the following multi-server index coding bound applying the result in \cite{optimality_2} separately on each of the $P$ index coding problems.
\begin{theorem} \label{multiple server bound}
If the rate tuple $\{r_{1,\mathcal{A}}^1,\ldots, r_{K,\mathcal{A}}^1,\ldots, r_{1,\mathcal{A}}^p,\ldots, r_{K,\mathcal{A}}^p,\ldots, r_{1,\mathcal{A}}^{P},\ldots,r_{K,\mathcal{A}}^{P}\}_{\mathcal{A}\subseteq [K]}$ is achievable for the multi-server index coding problem represented by the set of directed graphs $G_p,p=1,\ldots,P$, under the constraint $\sum_{j=1}^{\rho} n_{\pi(j)} r_{d_i,\mathcal{A}}^{\pi(j)} \geq \vert W_{d_i,\mathcal{A}} \vert$, and inter-file coding is not allowed, then $r_{j,\mathcal{A}}^p=0$ if server $S_p$ does not serve user $U_j$, and
\begin{align}
\sum_{p=1}^P \sum_{\mathcal{J}_p} r_{j,\mathcal{A}}^p \leq 1
\end{align}
for all $\mathcal{J}_p\subseteq \mathcal{I}_p $ where the subgraph of $G_p$ over $\mathcal{J}_p$ does not contain a directed cycle.
\end{theorem}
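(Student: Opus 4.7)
The plan is to reduce the multi-server setting to $P$ parallel single-server index coding problems and invoke the acyclic-subgraph converse of \cite{optimality_2} on each of them.

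First, I would establish that $r_{j,\mathcal{A}}^p = 0$ whenever $U_j \notin \mathcal{K}_p$. This is immediate from the delivery model: user $U_j$ only observes the broadcasts $X_{\mathcal{Z}_j}$, so any rate assigned by $S_p$ to a coded subsegment targeting $U_j$ with $p \notin \mathcal{Z}_j$ cannot assist $U_j$ in decoding and can be zeroed without loss.

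Second, I would fix $p \in [P]$ and show that the broadcast $X_p$, the message set $\{C_{d_j,\mathcal{A}}^p : j \in \mathcal{K}_p, \ \mathcal{A} \subseteq [K]\setminus \{j\}\}$, and the caches $\{V_j : j \in \mathcal{K}_p\}$ together form a standalone single-server index coding instance whose side-information graph is precisely $G_p$. Since inter-file coding is forbidden and the uncoded segments $W_{i,\mathcal{A}}$ are disjoint pieces of independent files, the induced messages $C_{d_j,\mathcal{A}}^p$ are themselves independent, each of nominal length $n_p r_{d_j,\mathcal{A}}^p$ bits. User $U_j$ caches every $W_{i,\mathcal{A}'}$ with $j \in \mathcal{A}'$, and because the MDS generator matrix $G$ is publicly known, $U_j$ can compute every corresponding $C_{i,\mathcal{A}'}^p$; this recovers exactly the directed edges of $G_p$ as defined in the theorem.

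Third, the acyclic-subgraph converse of \cite{optimality_2} applied to this per-server instance then yields
\begin{equation}
\sum_{(j,\mathcal{A}) \in \mathcal{J}_p} r_{j,\mathcal{A}}^p \leq 1
\end{equation}
for every induced acyclic subgraph $\mathcal{J}_p \subseteq \mathcal{I}_p$ of $G_p$, and assembling these $P$ inequalities produces the claimed bound. The main obstacle, I expect, is rigorously justifying the decoupling across servers: one must check that when applying the single-server converse at $S_p$, the broadcasts $X_q$ received by $U_j$ from other servers $q \in \mathcal{Z}_j \setminus \{p\}$ do not act as additional side information that would shrink $G_p$ and therefore weaken the bound. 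Here the no-inter-file-coding clause together with the MDS recovery requirement $\sum_{q \in \mathcal{Z}_j} n_q r_{d_j,\mathcal{A}}^q \geq |W_{d_j,\mathcal{A}}|$ is exactly what is needed: each $X_q$ is a function only of $S_q$'s own coded storage $Y_q$, the coded subsegments at distinct servers are jointly independent given the raw segment, and distinct segments are independent, so the other servers' transmissions contribute no information about $C_{d_j,\mathcal{A}}^p$ beyond what is already encoded in $V_j$. This decoupling is what makes the per-server acyclic-subgraph converse legitimate.
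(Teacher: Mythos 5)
Your overall route --- decompose into $P$ per-server index coding instances with side-information graphs $G_p$, apply the acyclic-subgraph converse of \cite{optimality_2} to each, and sum --- is exactly the paper's argument; the paper offers no more formal a proof than applying the result of \cite{optimality_2} separately to each of the $P$ index coding problems, and your first two steps (zeroing $r^p_{j,\mathcal{A}}$ when $U_j \notin \mathcal{K}_p$, and recovering the edges of $G_p$ from the cached segments via the publicly known generator matrix) match what the paper does implicitly.

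The one place your write-up goes wrong is the justification of the decoupling. You claim the broadcasts $X_q$, $q \in \mathcal{Z}_j \setminus \{p\}$, ``contribute no information about $C^p_{d_j,\mathcal{A}}$ beyond what is already encoded in $V_j$'' because the coded subsegments at distinct servers are ``jointly independent given the raw segment.'' Conditional independence given the segment is vacuous here (each $C^q_{d_j,\mathcal{A}}$ is a deterministic linear function of that segment), and the unconditional claim is false: any $\rho$ of the $P$ MDS-coded subsegments determine the segment and hence determine $C^p_{d_j,\mathcal{A}}$, so the other servers' transmissions are in fact highly informative about it. What actually licenses the per-server application of \cite{optimality_2} is not an independence argument but the hypothesis of the theorem itself: ``achievable for the multi-server index coding problem represented by the graphs $G_p$'' means each user must recover $C^p_{d_j,\mathcal{A}}$ from $X_p$ and its $G_p$-side-information alone, and the distinctness condition $H(W_{d_i,\mathcal{A}} \mid C^p_{d_i,\mathcal{A}}, C^q_{d_i,\mathcal{A}}) < H(W_{d_i,\mathcal{A}} \mid C^p_{d_i,\mathcal{A}})$ rules out repetition across servers. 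This restriction is substantive rather than automatic: the paper's own Example \ref{toy_ex} ($P=2$, $M_S=N$) shows that once servers hold overlapping content a user can combine $X_1$ and $X_2$ to align interference and achieve $T_{sd}=2$ where the bound would assert $T_{sd}\geq 3$. So the decoupling should be stated as an assumption on the class of schemes being bounded, not derived from an independence claim that does not hold.
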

\begin{remark}
Theorem \ref{multiple server bound} holds when the nodes in the $P$ index coding problems correspond to distinct coded subsegments, that is, there are no repeating nodes in any two index coding problems. Distributed storage schemes which concatenate repetition codes with other storage codes may not satisfy the bound in Theorem \ref{multiple server bound} (for example, see \cite{FR_codes}). References \cite{distr_index_coding} and \cite{cooperative_index_coding} may indicate how to compute the capacity under such distributed storage schemes, but they are outside the scope of this paper. 
\end{remark}
\begin{remark}
There are non-MDS distributed storage codes, called regenerating codes, that utilize increased storage capacity on the servers to reduce the repair bandwidth \cite{dimakis_networkcodes}. Theorem \ref{multiple server bound} holds for them unless some repetition code is used, because the problem can still be represented as $P$ independent index coding problems. For example, Theorem \ref{multiple server bound} holds if a product matrix code \cite{product_matrix} is used for distributed storage. However, a sub-optimal delivery latency is achieved, because each server stores a larger number of packets that have to be transmitted to the connected users for successful file reconstruction.
\end{remark}

To identify the acyclic sets $\mathcal{J}_p$ in the subgraph $G_p$, consider the permutations $\mathbf{u}=(u_1, \ldots, u_K)$ of $[K]$. To determine the tightest bound, we may only consider the largest such sets without a directed cycle. For a given $\mathbf{u}$, the largest set of nodes not containing a directed cycle is 
\begin{align}
    \left\{ C^p_{d_{u_i},\mathcal{A}_i}  : \mathcal{A}_i \subseteq [1:K]\setminus \{ u_1, \ldots, u_i \},i=1,\ldots, K\right\}.
\end{align} Each permutation $\mathbf{u}$ gives a unique acyclic set of nodes of the graph. The subsegment $C^p_{d_i,\phi}$ is not cached in any user, so there is no outgoing edge from $C^p_{d_i,\phi}$ to any other nodes in any sub-index coding problem. Therefore $C^p_{d_i,\phi}$ is always in the set $\mathcal{J}_p$.

Consider first $M_S=\frac{N}{\rho}$. In that case, $\sum_{j=1}^{\rho} n_{\pi(j)} r_{d_i,\mathcal{A}}^{\pi(j)} = \sum_{j=1}^{\rho} \vert C_{d_i,\mathcal{A}}^{\pi(j)} \vert = \vert W_{d_i,\mathcal{A}} \vert$. Following Theorem \ref{multiple server bound}, in order to recover all the desired segments for each user, the deliver latency, $T_{sd}$ must satisfy
\begin{align} 
 FT_{sd} &\geq \sum_{p=1}^{P}\left(\sum_{ \mathcal{A}\subseteq [1:K]\setminus \{u_1\}} \lvert C^{p}_{d_{u_1},\mathcal{A}} \rvert +  \cdots + \sum_{\mathclap{\mathcal{A}\subseteq [1:K]\setminus (\{u^i\}\cap \mathcal{K}_p)  } } \lvert C^{p}_{d_{u_i},\mathcal{A}} \rvert + \cdots + \sum_{\mathclap{ \mathcal{A}\subseteq [1:K]\setminus (\{u^K\} \cap \mathcal{K}_p) }} \lvert C^{p}_{d_{u_K},\mathcal{A}} \rvert \right) \label{eq:bound} \\
 & \ \ \ \  \text{s.t.}\ \ \ \ \sum_{p\in \mathcal{Z}_j} \lvert C^{p}_{d_{u_j},\mathcal{A}} \rvert =  \lvert W_{d_{u_j},\mathcal{A}} \rvert \quad \quad j\in [K],\label{eq:constraint}
\end{align}
for every permutation $\mathbf{u}$, and for every network topology.

We have $\lvert C^{p}_{d,\mathcal{A}} \rvert = \frac{\lvert W_{d,\mathcal{A}} \rvert}{\rho}$ for $M_S=\frac{N}{\rho}$, due to $(P,\rho)$ MDS coded storage. In Eq. \eqref{eq:bound}, in the summation for a fixed value of $p$, the number of terms with $\vert \mathcal{A} \vert =i$ is $ {K\choose i+1}-{K-q_p\choose i+1}$. Thus we have
\begin{align} 
 FT_{sd} \geq &  \sum_{p=1}^{P} \sum_{i=0}^{K-1} \frac{\left( {K\choose i+1}-{K-q_p\choose i+1}\right)}{{K\choose i}} x^{p}_i \\
= & \sum_{p=1}^{P} \sum_{i=0}^{K-1} \frac{\left( {K\choose i+1}-{K-q_p\choose i+1}\right)}{\rho {K\choose i}} x_i\\
= &  \sum_{i=0}^{K-1} \left[ \sum_{p=1}^{P} \frac{\left( {K\choose i+1}-{K-q_p\choose i+1}\right)}{\rho {K\choose i}}\right] x_i \label{lower_bound}\\
\text{while} \ \ & x_0 + x_1 + \cdots + x_K \geq F , \label{constraint1}\\
\text{and} \ \  & x_1 + 2x_2 + \cdots + Kx_K \leq \frac{KM_U}{N}F \label{constraint2}
\end{align}
where $x_i\triangleq \sum_{\mathcal{A} \subset [K]: \lvert \mathcal{A} \rvert = i} \lvert W_{j,\mathcal{A}} \rvert  = {K \choose i} \lvert W_{j,\mathcal{A}} \rvert = \rho {K \choose i} \lvert C^p_{j,\mathcal{A}} \rvert$ is the total normalized size of all segments of file $j$ cached by $i$ users; or equivalently, $x_i^p \triangleq  {K \choose i} \vert C_{j,\mathcal{A}}^p \vert = \frac{1}{\rho} x_i $ is the total normalized size of all subsegments of file $j$ cached by $i$ users and stored in server $S_p$.
We minimize the lower bound in Eq. \eqref{lower_bound} over all segment sizes $x_i$, which is a linear program with two linear constraints \eqref{constraint1} and \eqref{constraint2}, where the former follows from the sum of all fractions of the files being one, while the latter follows from the user cache memory constraint. The solution of a linear program lies on one of the corner points of the feasible region. The feasible region defined by the constraints has only one corner point characterized by 
\begin{align*}
    x_i=\left\{  \begin{array}{cc}
        F & i=t, t=\frac{KM_U}{N} \\
        0 & \text{otherwise}
    \end{array}
    \right. .
\end{align*}
Therefore, Eq. \eqref{lower_bound} simplifies to
\begin{align}
T_{sd} \geq \sum_{p=1}^{P} \frac{\left( {K\choose t+1}-{K-q_p\choose t+1}\right)}{\rho{K\choose t}},
\end{align}
which is achieved by our delivery scheme. This proves the optimality of the delivery scheme for successive transmission proposed in Section \ref{s:Successive} under the assumption of MDS coded storage at the servers and uncoded caching at the users.

\subsection{Redundancy in server storage}
When there is redundant server storage capacity, i.e., server storage capacity is 
$M_S=\frac{N}{\rho-z}$, consider the constraint $\sum_{j=1}^{\rho} n_{\pi(j)} r_{d_i,\mathcal{A}}^{\pi(j)} \geq \vert W_{d_i,\mathcal{A}} \vert$ in Theorem \ref{multiple server bound}. Since the bound in Theorem \ref{multiple server bound} is a linear program of the rates of transmission of the coded subsegments from the servers, the optimal solution lies on one of the corner points of the feasible region defined by the constraint. The corner points for $M_S=\frac{N}{\rho-z}, z\in [\rho-2]$, are those where $ n_{\pi(j)} r_{d_i,\mathcal{A}}^{\pi(j)}=\frac{ \vert W_{d_i,\mathcal{A}} \vert}{\rho-z}$ for all $j\in \mathcal{R}, \mathcal{R}\subset \mathcal{Z}_i, \vert \mathcal{R} \vert = \rho - z$, and equal to $0$ for all $j \in \mathcal{Z}_i \setminus \mathcal{R}$. The optimal solution should lie on the corner point which chooses $\mathcal{R}$ such that the servers in $\mathcal{R}$ have the most multicasting opportunities, and can thus deliver $\rho-z$ coded subsegments of the requested segments to the users in the minimum number of transmissions. This is equivalent to finding the minimum cover for each multicast group as described in Section \ref{sec:incservercache}.

When fractional repetition (FR) codes are used for server storage \cite{fr_codes}, the minimum cover scheme may not be optimal. However, since FR codes have a maximum code rate of $\frac{1}{2}$, we cannot have distributed storage schemes where $M_S \leq \frac{2N}{P}$. Thus the minimum cover scheme is optimal for server storage capacities $M_S \leq \frac{2N}{P}$. We illustrate with a toy example that the bound in Theorem \ref{multiple server bound} does not hold when FR codes are used.
\begin{example}\label{toy_ex}
Consider the simple scenario with $P=2$ servers, $K=3$ users illustrated in Fig. \ref{fig:toy}, where we assume each server can store all the $N=3$ files, i.e., $M_S=3$, and each user has cache capacity $M_U=1$. Let the cache contents of $U_1, U_2, U_3$ be $W_2, W_3, W_1$, respectively, and the demand vector $\mathbf{d}=\{ W_1, W_2, W_3\}$. In this example, the demands can be satisfied by $S_1$ transmitting $W_1 \oplus W_2$, and $S_2$ transmitting $W_1 \oplus W_3$, that is, the delivery latency of $T_{sd}=2$ is achievable. However, Theorem \ref{multiple server bound} gives the bound on delivery latency as $T_{sd}\geq 3$. $U_2$ receives its requested file $W_2$ with added interference of $W_1$ from $S_1$, which it cannot remove using its cache contents. However, $U_2$ adds the messages from $S_1$ and $S_2$ to align the interference on $W_2$ with $W_3$, which it can remove by using its cache contents, thus doing a sort of interference alignment. In contrast, if MDS coded storage were used, the interference alignment type of scheme would not be possible due to distinct coded subsegments transmitted by both servers.
\begin{figure}[htbp]
    \centering
    \includegraphics[scale=0.6]{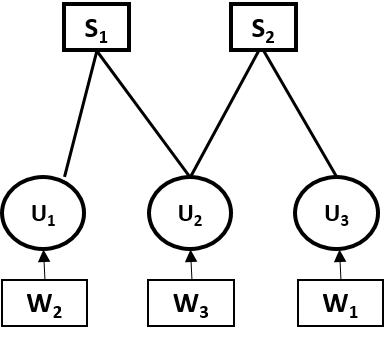}
     \caption{Toy example with $P=2, K=N=M_S=3, M_U=1$}
    \label{fig:toy}
\end{figure}
\end{example}
The polymatroidal capacity region for multi-server index coding has been characterized in \cite{distr_index_coding} for full user-server connectivity and uncoded server storage. Characterizing the capacity region for partial user-server connectivity, and constructing an optimal joint server storage and caching scheme for FR coded distributed storage is an interesting open problem for future work.

\section{Parallel SBS Transmissions}\label{s:parallel}
When SBSs can deliver in parallel without interfering with each other, the normalized delivery latency is dictated by the SBS that has to deliver the maximum number of bits:
\begin{align}\label{eq:link_load}
T_{pd} \triangleq \max_{q_p} \frac{1}{\rho {K\choose t}} \left[{K\choose t+1}-{K-q_p\choose t+1} \right].
\end{align}

The ``best'' and ``worst'' network topologies in the parallel transmission scenario are different from those in the successive transmission scenario. The most balanced topology, i.e., the one with the minimum value of the maximum $q_p$ has the ``best'' (lowest) delivery latency, contrary to the successive transmission scenario, in which this would be the ``worst'' topology. The corresponding delivery latency can be obtained by substituting $q_p=\lceil \frac{K\rho}{P} \rceil$ in \eqref{eq:link_load}.  The topology with the maximum possible $q_p$, i.e., any topology with at least one server connected to all $K$ users, is the ``worst'' topology since it has the highest delivery latency.  
\subsection{Redundant server storage capacity}
The minimum server storage capacity that would allow the reconstruction of any demand combination is given by $M_S=\frac{N-M_U}{\rho}$. In this case, we cache the same $\frac{M_U}{N}$ fraction of the library in all the user caches during the placement phase, and deliver the remaining fraction of the demands from the servers without multicasting. The worst-case delivery latency in this case is $T_{pd}=\frac{K}{\rho}\left(1-\frac{M_U}{N} \right)$.

Next, we consider the case when there is redundancy in server memories; that is, $\frac{N}{\rho}< M_S \leq N$. Assume that $M_S=\frac{N}{\rho - z}$ for some integer $ z\in [\rho -1]$. For non-integer values of $z$, the solution can be obtained by memory-sharing. Notice that, as for successive transmissions, users can select the servers from which to receive coded subsegments. A greedy server allocation algorithm is used. The algorithm assigns the multicast messages to the servers trying to keep the number of messages delivered by each server as evenly distributed as possible. At any point in time, if a server has delivered a higher number of messages than all the other servers, even if a better multicasting opportunity is available to this server, that server is not assigned a multicast message in order to balance the number of messages delivered by each server in a greedy manner. Instead, the server with the next best multicasting opportunity and a smaller count of transmissions is assigned to transmit a particular coded subsegment to a multicast group. Compare this with the algorithm for successive transmission, where a multicast message is always assigned to the server with the maximum multicasting opportunity. It is easy to see that the delivery latency achieved depends on the order in which the algorithm assigns multicast messages to the servers. Thus the proposed algorithm is suboptimal. Numerical results illustrating the performance of the proposed delivery algorithm will be presented in the next section.

\section{Results and Discussions} \label{s:Discussion}

\begin{figure}[h]
    \centering
    \includegraphics[width=5.2cm, height=4.7cm]{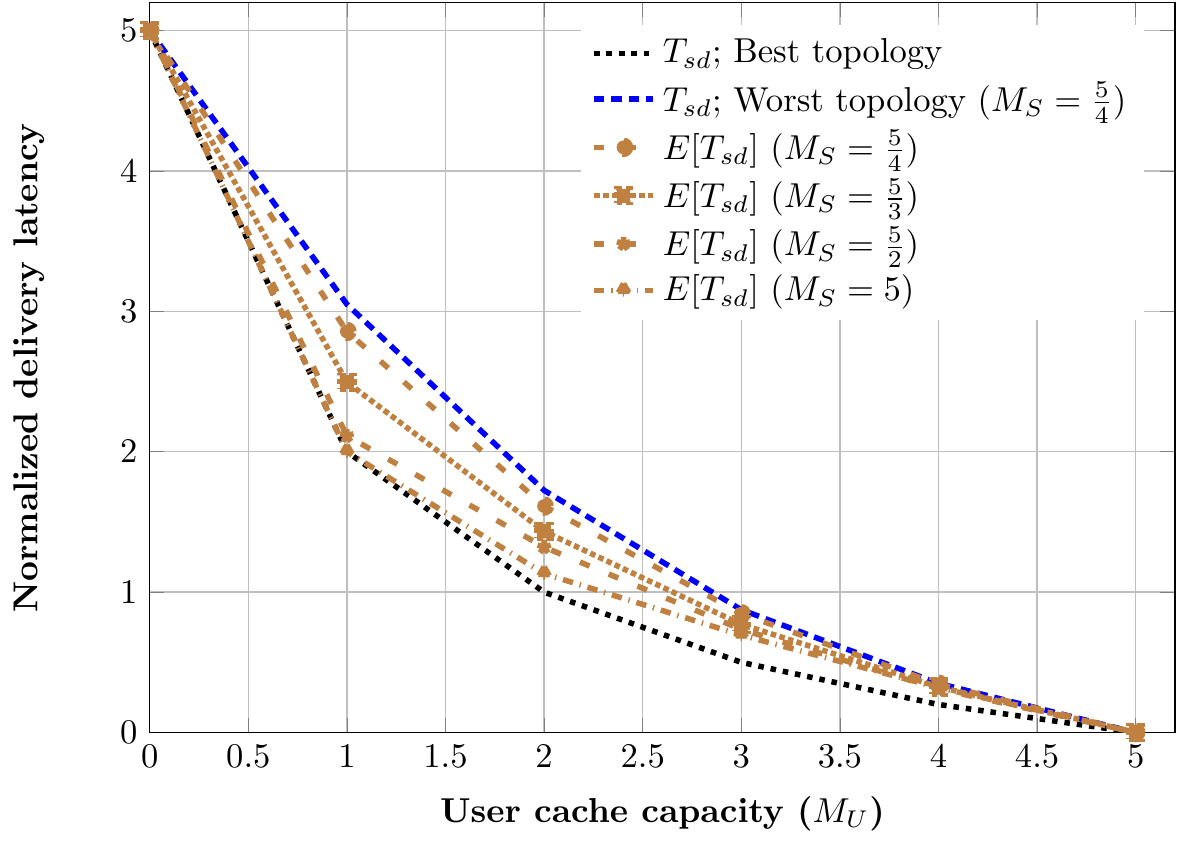}
\caption{\small Average normalized delivery latency vs. user cache capacity $M_U$, for $P=7, N=K=5, \rho=4$, and for server storage capacities of $M_S=\frac{5}{4}, \frac{5}{3}, \frac{5}{2}, 5$. }\label{fig:f3}
\end{figure}
In Fig. \ref{fig:f3} we plot the achievable trade-off between the user cache capacity and the normalized delivery latency, $T_{sd}$, for the best and worst topologies, and the average normalized delivery latency over all topologies, for successive transmission. 
The trade-off curves are plotted for different server storage capacities. We observe that the gap between the worst and the best topologies can be significant. From \eqref{eq:sumrate} and \eqref{eq:avgrate} we can deduce that, for successive transmission the worst topology delivery latency; and hence, the average delivery latency of the proposed scheme are both within a multiplicative factor of $\frac{1}{\alpha}$ of the best topology delivery latency. We observe from Fig. \ref{fig:f3} that the delivery latency decreases significantly, particularly for low $M_U$ values, as the redundancy in server storage increases. 
\begin{figure}
    \centering
\includegraphics[width=5.2cm, height=4.7cm]{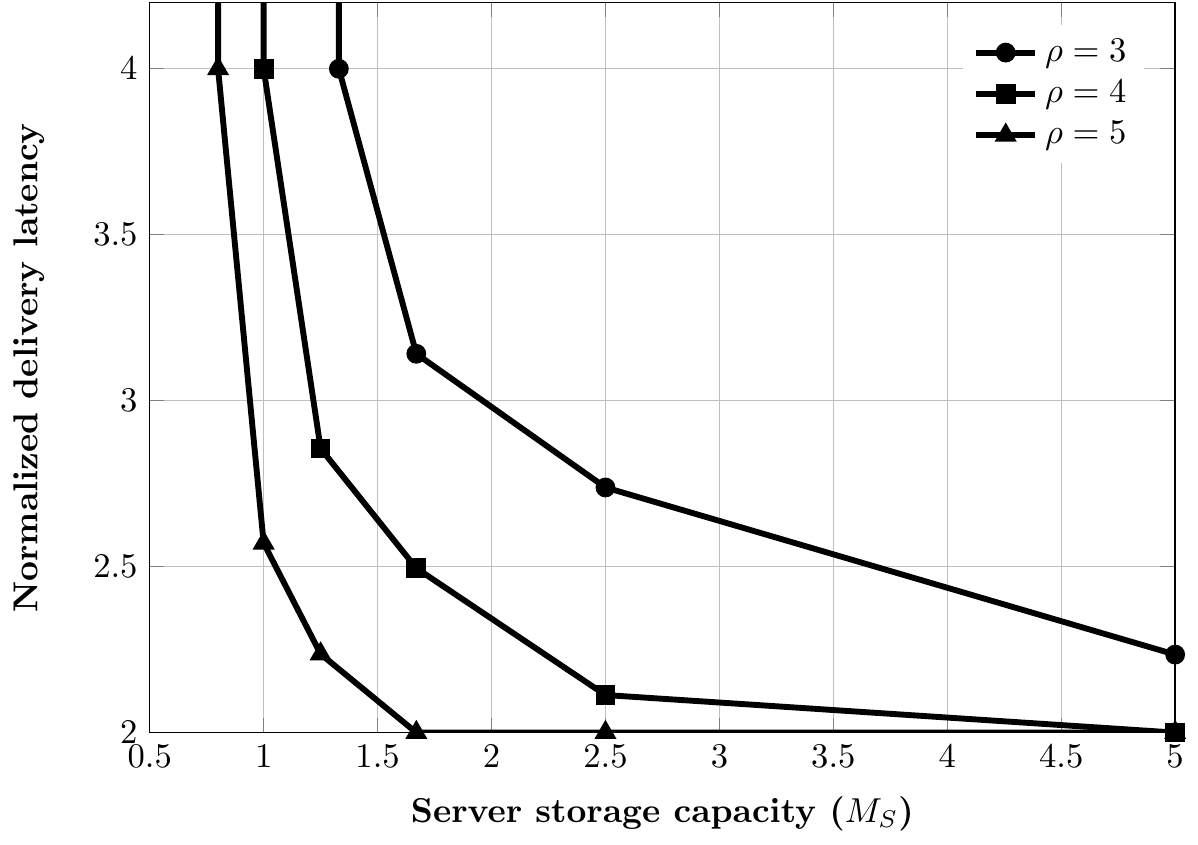}
\caption{\small Average normalized delivery latency vs. server storage capacity $M_S$, for $P=7, N=K=5, M_U=1$ for successive SBS transmissions. }\label{fig:f4}
\end{figure}

In Fig. \ref{fig:f4} the average delivery latency for successive transmission is plotted as a function of the server storage capacity for server storage capacities $M_S\in [\frac{N-M_U}{\rho}, N]$. The figure is obtained by performing Monte Carlo simulations with uniform random realizations of the topology and averaging the delivery latency over them. We observe from Fig. \ref{fig:f4} that the average delivery latency decreases rapidly for an initial increase in the server storage capacity, which is more significant for high $\rho$ values. This is because, thanks to MDS-coded storage at the servers, the number of available multicasting opportunities increases with the redundancy across servers. Fig. \ref{fig:f4} highlights the fact that, for successive delivery and sufficient network connectivity, increasing the server storage beyond a certain value has little or no impact on the delivery latency.

In Fig. \ref{fig:f7}, it is shown that Eq. \eqref{closedform_redundant} in Section \ref{performance_analysis} gives a fairly accurate estimate of the expected delivery latency for successive transmissions with redundant server storage capacity, especially for small server storage capacities. The theoretical estimate diverges a little from the expected rate for large server storage capacity, before again converging where the delivery latency saturates at the minimum. Also, comparing the plot for $\rho=9, P=21$ in Fig. \ref{fig:f7} with the plot for $\rho=3, P=7$ in Fig. \ref{fig:f4}, where the connectivity $\alpha$ is the same, we observe that the average delivery latency decreases faster for $\rho=9, P=21$; that is, for larger values of $P$.

\begin{figure}[h]
    \centering
\includegraphics[width=5.5cm, height=5.0cm]{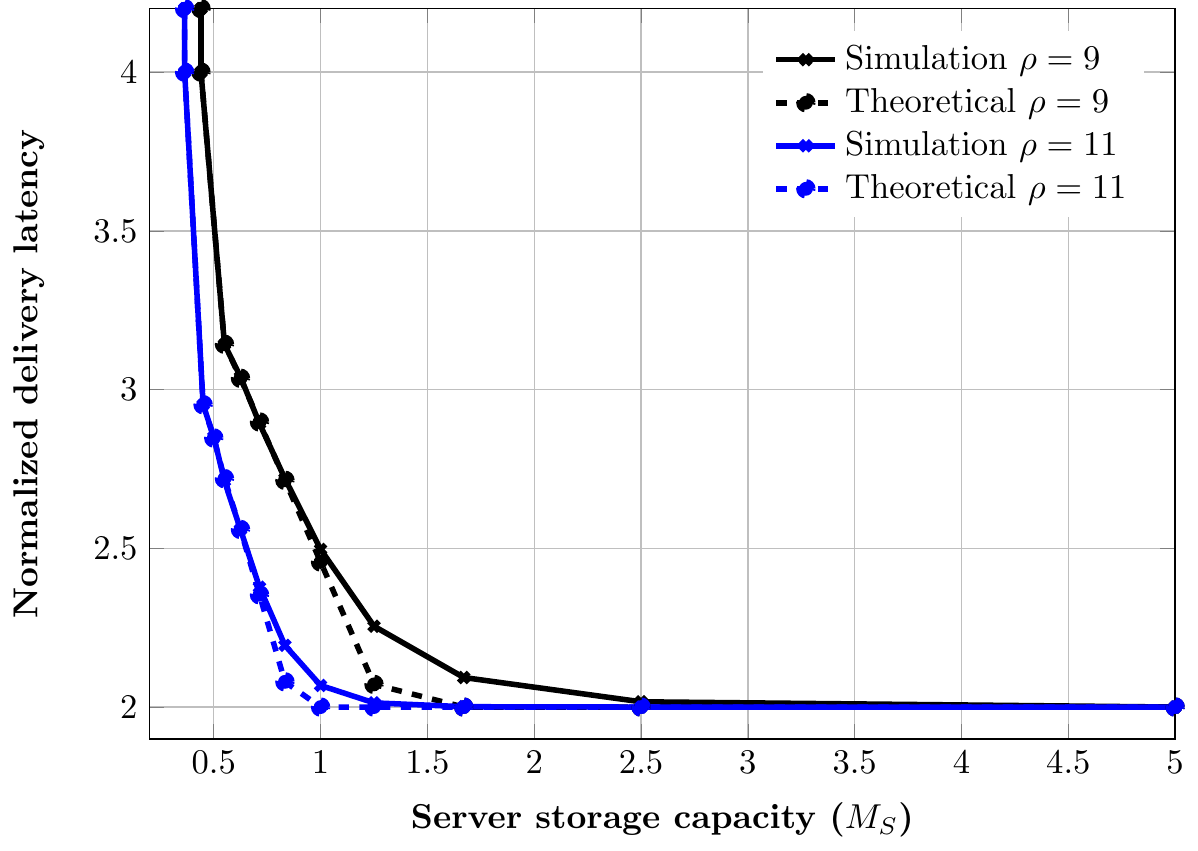}
\caption{\small Comparing the simulation with the theoretical, Average normalized delivery latency vs. server storage capacity, for $P=21, N=K=5, M_U=1$ for successive SBS transmissions. }\label{fig:f7}
\end{figure}

\begin{figure}
    \centering
\includegraphics[width=4.7cm, height=4.2cm]{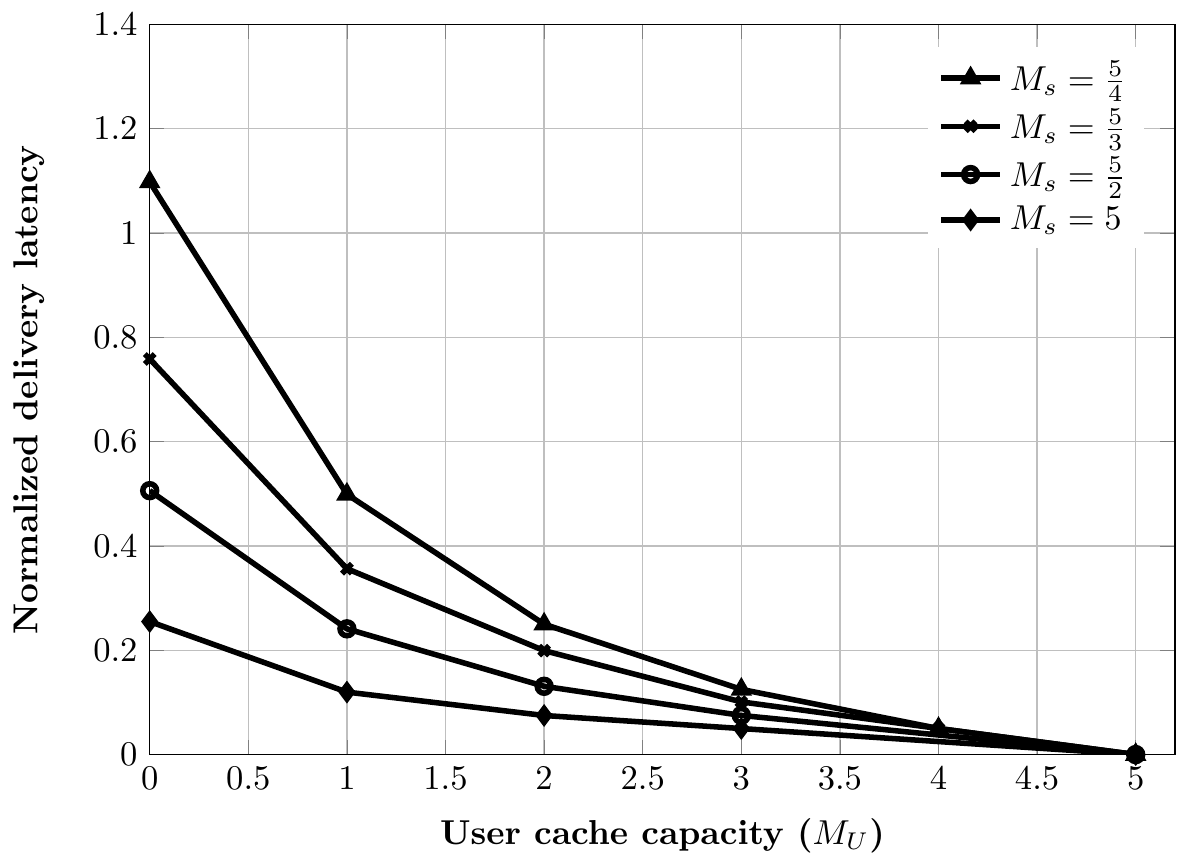}
\caption{\small Average normalized delivery latency for parallel transmissions vs. user cache capacity $M_U$, for $P=7, N=K=5, \rho=4$, and for server storage capacities of $M_S=\frac{5}{4}, \frac{5}{3}, \frac{5}{2}, 5$. }\label{fig:f6}
\includegraphics[width=4.7cm, height=4.2cm]{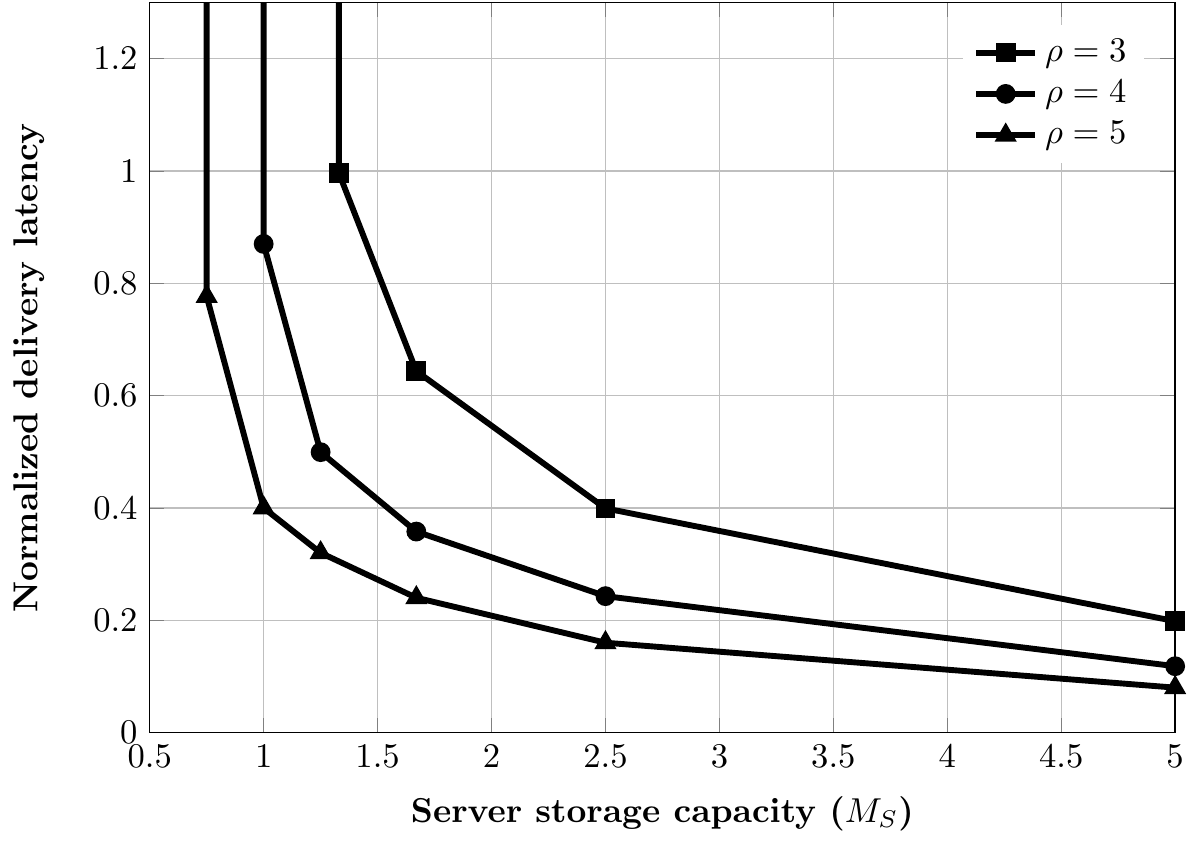}
\caption{\small Average normalized delivery latency vs. server storage capacity $M_S$, for $P=7, N=K=5, M_U=1$ for parallel transmissions. }\label{fig:f5}
\end{figure}

The average delivery latency for parallel transmissions is plotted with respect to the user cache capacity in Fig. \ref{fig:f6}, using \eqref{eq:link_load}. We observe as before that increasing the server storage capacity gives significant gains in the average delivery latency, especially for low values of $M_U$. Unlike the case for successive transmissions, the average delivery latency for $M_U=0$ also reduces as the server storage capacity is increased.

The average delivery latency for parallel transmissions is plotted with respect to the server storage capacity, $M_S$, in Fig. \ref{fig:f5}. Unlike the delivery latency for successive transmissions, we can see that the delivery latency does not saturate, and keeps decreasing until all the files are stored at each of the servers. We also observe as before that the increase in network connectivity $\alpha$ helps reduce the delivery latency significantly, especially for low server storage capacity $M_S$.

\section{Conclusions and future work}

We have studied a multi-server coded caching and delivery network, in which cache-equipped users connect randomly to a subset of the available servers, each with its own limited storage capacity. While this allows each server to have only a limited amount of storage capacity, it requires coded storage across servers to account for the random topology. We proposed a joint coded storage, caching and delivery scheme that jointly applies MDS-coded storage at the servers, and uncoded caching and coded delivery to the users. The achievable delivery latency of this scheme for both successive and parallel transmissions from the SBSs are presented, with increasing user cache memory as well as increasing server storage capacity, and their averages over random network topologies are plotted. The analysis shows that when the server storage capacity is increased, the delivery latency can be reduced significantly, for both successive transmissions as well as parallel transmissions. However, it is also observed that for sufficient network connectivity, increasing server storage beyond a certain value provides little benefit. Increasing server storage has a more significant impact when there is low connectivity, and when user cache capacities are small.

An interesting open problem for future work is finding a lower bound and an optimal scheme when FR codes are used for distributed storage. A toy example \ref{toy_ex} is given in this paper which illustrates the potential benefits of such codes. The toy example also presents an asymmetry in the user connections to the servers, where users 1 and 3 connect to one server each, while user 2 connects to 2 servers. An interesting problem is constructing a general scheme for heterogeneous network topologies and extracting gains from such topologies as demonstrated in the toy example. Another question relates to gains from heterogeneous distributed storage. For instance, if there is knowledge of user dynamics and non-uniform probabilities of the user-server connections, can a heterogeneous distributed storage scheme be designed to extract higher average gains? An extreme case of this scenario would mimic the combination network model where the user-server connections are completely fixed and known, which achieves higher gains. Such open problems present ripe material for future research.


\bibliographystyle{IEEEtran}


\end{document}